\algrenewcommand\algorithmicrequire{\quad \;\;\textbf{Input:}}
\algrenewcommand\algorithmicensure{\quad \; \,\textbf{Output:}}
\newcommand\numberthis{\addtocounter{equation}{1}\tag{\theequation}}
\newtheorem{assumption}{Assumption}
\newcommand{\ignore}[1]{}
\DeclareMathOperator*{\argmin}{arg\,min}
\newcommand{\norm}[1]{\|#1\|}
\begin{document}

\title{Spatial-Temporal Moving Target Defense: A Markov Stackelberg Game Model }  



\author{Henger Li, Wen Shen, Zizhan Zheng}
\affiliation{
\institution{Department of Computer Science, Tulane University}
\address{ 6823 Saint Charles Ave, New Orleans, LA 70118}
}
\email{{hli30, wshen9, zzheng3}@tulane.edu}

\begin{abstract}  
Moving target defense has emerged as a critical paradigm of protecting a vulnerable system against persistent and stealthy attacks. To protect a system,  a defender proactively changes the system configurations to limit the exposure of security vulnerabilities to potential attackers. In doing so, the defender creates asymmetric uncertainty and complexity for the attackers, making it much harder for them to compromise the system. In practice, the defender incurs a switching cost for each migration of the system configurations. The switching cost usually depends on both the current configuration and the following configuration. Besides,  different system configurations typically require a different amount of time for an attacker to exploit and attack. Therefore, a defender must simultaneously decide both the optimal sequences of system configurations and the optimal timing for switching. In this paper, we propose a Markov Stackelberg Game framework to precisely characterize the defender's spatial and temporal decision-making in the face of advanced attackers. We introduce a relative value iteration algorithm that computes the defender's optimal moving target defense strategies. Empirical evaluation on real-world problems demonstrates the advantages of the Markov Stackelberg game model for spatial-temporal moving target defense.
\end{abstract}

\keywords{Moving target defense; Stackelberg game; Markov decision process}  

\maketitle


\section{Introduction}


Moving target defense (MTD) has established itself as a powerful framework to counter persistent and stealthy threats that are frequently observed in Web applications~\cite{vikram2013nomad,taguinod2015toward}, cloud-based services~\cite{jia2014catch,peng2014moving},  database systems~\cite{zhuang2012simulation}, and operating systems~\cite{hong2015assessing}. The core idea of MTD is that a defender 
proactively switches configurations (a.k.a., attack surfaces) of a vulnerable system to increase the uncertainty and complexity for potential attackers and limits the resources (e.g., window of vulnerabilities) available to them~\cite{jajodia2011moving}. In contrast to MTD techniques, traditional passive defenses typically use analysis tools to identify vulnerabilities and detect attacks. 

A major factor for a defender to adopt MTD techniques rather than passive defensive applications is that adaptive and sophisticated attackers often have asymmetric advantages of resources (e.g., time, prior knowledge of the vulnerabilities) over the defender due to the static nature of system configurations~\cite{sood2012targeted,okhravi2015moving}. For instance, clever attackers are likely to exploit the system over time and subsequently identify the optimal targets to compromise  without being detected by the static defensive applications~\cite{zhuang2014towards}. In such case, a well-crafted MTD technique is more effective because it changes the static nature of the system and increases the attackers' complexity and cost of mounting successful attacks ~\cite{evans2011effectiveness}. In this way, it reduces or even eliminates the attackers' asymmetric advantages of the resources. 

Despite the promising prospects of MTD techniques, it is challenging to implement optimal MTD strategies.  Two factors contribute to the difficult situation. On one hand, the defender must make the switching strategies sufficiently unpredictable because otherwise the attacker can thwart the defense.  On the other hand, the defender should not conduct too frequent system migrations because each migration incurs a switching cost that depends on both the current configuration and the following configuration. Thus, the defender must make a careful tradeoff between  the effectiveness and the cost efficiency of the MTD techniques. The tradeoff requires the defender to simultaneously decide both the optimal sequences of system configurations (i.e., the next configuration to be switched to) and the optimal timing for switching. To this end, an MTD model must precisely characterizes the defender's spatial-temporal decision-making.

A natural approach to model the strategic interactions between the defender and the attacker is to use game-theoretic models such as the zero-sum dynamic game~\cite{zhu2013game} and the Stackelberg Security game (SSG) model~\cite{vadlamudi2016moving,sengupta2017game,sengupta2018moving}. The dynamic game model is general enough to capture different transition methods of the system states and various information structures but optimal solutions to the game are often difficult to compute. This model also assumes that the switching cost from one configuration to another is fixed~\cite{zhu2013game}. In the SSG model, the defender commits to a mixed strategy that is independent of the state transitions of the system~\cite{vadlamudi2016moving,sengupta2017game}. The attacker adopts a best-response strategy to the defender's mixed strategy. While optimal solutions to an SSG can be obtained efficiently, the SSG models neglect the fact that both the defender's and the attacker's strategy can be contingent on the states of the system configurations.  To address this problem, \citeauthor{feng2017stackelberg} incorporate both Markov decision process and Stackelberg game into the modeling of the MTD game~\cite{feng2017stackelberg}. 

Many MTD models~\cite{sengupta2017game,sengupta2018moving,chowdhary2018mtd,jajodia2018share} do not explicitly introduce the concept of the defending period, although they usually assume that the defender chooses to execute a migration after a constant time period~\cite{sengupta2019survey}. A primary reason is that when both time and the system's state influence the defender's decision making, optimal solutions to the defender's MTD problem are non-trivial~\cite{manadhata2013game,sengupta2019survey}. Recently, \citeauthor{li2019optimal} has proposed to incorporate timing into the defender's decision making processes~\cite{li2019optimal}. Their work assumes that there is a positive transition probability between any two configurations (so that the corresponding Markov decision process is unichain), which may lead to sub-optimal MTD strategies.  
Further, their model assumes that all the attackers have the same type, which might not be true in reality. To solve the defender's optimal MTD problem in more general settings, an MTD model that precisely models the strategic interactions between the defender and the attacker is in urgent need.

\noindent{\bf Our contributions.} In this paper, we propose a general framework called the {\em Markov Stackelberg Game} (MSG) model for spatial-temporal moving target defense. The MSG model enables the defender to implement optimal defense strategy that is contingent on both the source states and the destination states of the system. It also allows the defender to simultaneously decide which state the system should be migrated to and when it should be migrated. In the MSG model, we formulate the defender's optimization problem as an average-cost semi-Markov decision process (SMDP)~\cite{puterman2014markov} problem.  We present a 
value iteration algorithm that can solve the average-cost SMDP problem efficiently after transforming the original average-cost SMDP problem into a discrete time Markov decision process (DTMDP) problem. We empirically evaluate our algorithm using real-world data obtained from the National Vulnerability Database (NVD)~\cite{o2009national}.  Experimental results demonstrate the advantages of using MSG over the state-of-the-art approaches in MTD.

\section{Related Work}
Our MSG model is built on an important game-theoretic model called {Stackelberg games}~\cite{von2010market} that has broad applications in many fields, including spectrum allocation~\cite{zhang2009stackelberg}, smart grid~\cite{yu2015real} and security~\cite{sinha2018stackelberg}. In a Stackelberg game, there are two players: a leader and a follower. The leader acts first by committing to a strategy. The follower observes the leader's strategy and then maximizes his reward based on his observation~\cite{an2016stackelberg}.  
Each player in a Stackelberg game has a set of possible {\em pure strategies} (i.e., actions) that can be executed. A player can also play a mixed strategy that is a distribution over the pure strategies. 

 Stackelberg games have been extensively used to model the strategic interactions between a defender and an attacker in security domains~\cite{sinha2018stackelberg}.  This category of Stackelberg games is often called {\em Stackelberg security games} (SSGs).  In SSGs, a defender aims to defend a set of targets using a limited number of resources. An attacker learns the defender's strategy and mounts attacks to maximize his benefits.  The pure strategy for a defender is an allocation of the defender's limited resources to the targets, while a mixed strategy is a probability distribution over all the possible allocations. The defender's optimization problem in an SSG is to compute a mixed strategy that maximizes her expected utility (i.e., minimizes her expected losses) given that the attacker learns the defender's mixed strategy and performs a best response to the defender's action.  Whenever there is a tie,  the attacker always breaks the tie in favor of the defender~\cite{nguyen2016towards}. This solution concept is called the {\em strong Stacklberg equilibrium}~\cite{breton1988sequential}. In our paper, we consider the strong Stacklberg equilibrium as the solution concept for the MSG game.

\section{Model}

In this section, we describe the Markov Stackelberg Game (MSG) model for moving target defense where a defender and an attacker compete for taking control of the system.  Specifically, we first introduce key notations used to model the system configuration, the defender and the attacker. We then present both the attacker's and the defender's optimization problems in the spatial-temporal moving target defense game. Finally, we compare our MSG model with the Stackelberg Security Game model~\cite{paruchuri2008playing,sinha2018stackelberg} that has been extensive studied in security research.

\subsection{System Configuration}
In moving target defense, a defender proactively shifts variables of a computing system to increase uncertainty for the attacker to mount successful attacks. The variables of the system are often called adaptive aspects~\cite{zhuang2014towards,zhuang2015theory}. Typical adaptable aspects include IP address, port number, network protocol, operating system, programming language, machine replacement, and memory to cache mapping~\cite{zhuang2015theory}. A computing system usually have multiple adaptable aspects. Let $D \in \mathbb{N}$ denote the number of adaptive aspects for the computing system and $\Phi_{i}$ the set of sub configurations in the $i$th adaptive aspect. 
If the defender selects the sub configuration $\phi_{i} \in \Phi_{i}$ for the $i$th adaptive aspect, 
then the configuration state of the system is denoted as $s = (\phi_{1}, \phi_{2}, \dots, \phi_{D})$.  Here, $s$ is a generic element of the set of system configurations $S = \Phi_{1} \times \Phi_{2} \times \dots \times \Phi_{D}$. Let $n = |S|$ denote the number of system configurations in $S$.

\begin{example}
Consider a mobile app with two adaptive aspects: the programming language aspect $\Phi_{1} = \{Python, Java, JavaScript\}$ and the operating system aspect $\Phi_{2} =  \{iOS, Android\}$. If the defender selects $Java$ for the programming language aspect and $iOS$ for the operating system aspect, then the composite system configuration is $s = (\phi_{1}, \phi_{2}) = (Java, iOS)$.  The maximum number of valid configurations for the app is 6. However, it is likely that only a subset of  the configuration set is valid due to both physical constraints and performance requirement of the system. 
\end{example}

\subsection{The Attacker}
We consider a stealthy and persistent attacker that constantly exploits vulnerabilities of the system configurations to gain benefits.
\subsubsection{Attacker Type}
A successful attack requires a competent attacker that has the expertise to exploit the vulnerabilities. For instance, to gain control of a remote computer, the attacker needs the capability to obtain the control of at least one of the following resources: the IP address of the computer, the port number of a specific application running on the machine, the operating system type, the vulnerabilities of the application, or the root privilege of the computer~\cite{zhuang2015theory}. 
The attacker's ability profile is called the {\em attacker type}. There are different attacker types. Let $L$ denote the set of attacker type. Each attacker type $l \in L$ is capable of mounting a set of attacks $A_{l}$. The attacker type space $A_{l}$ is a nonempty set of the attack methods that each targets one vulnerability in one adaption aspect, which may affect multiple sub configurations in that aspect.
Multiple attacks may target the same vulnerability but may have different benefit (loss) to the attacker (defender) as defined below.
Whether an attack method belongs to an attacker type space or not depends on the application scenarios.
Figure~\ref{fig:relation} illustrates the relationship between attacker types, attack methods, adaptation aspects, sub-configuration parameters 
and system configurations.

\begin{figure}[t]
\centering
 \includegraphics[width=.85\linewidth]{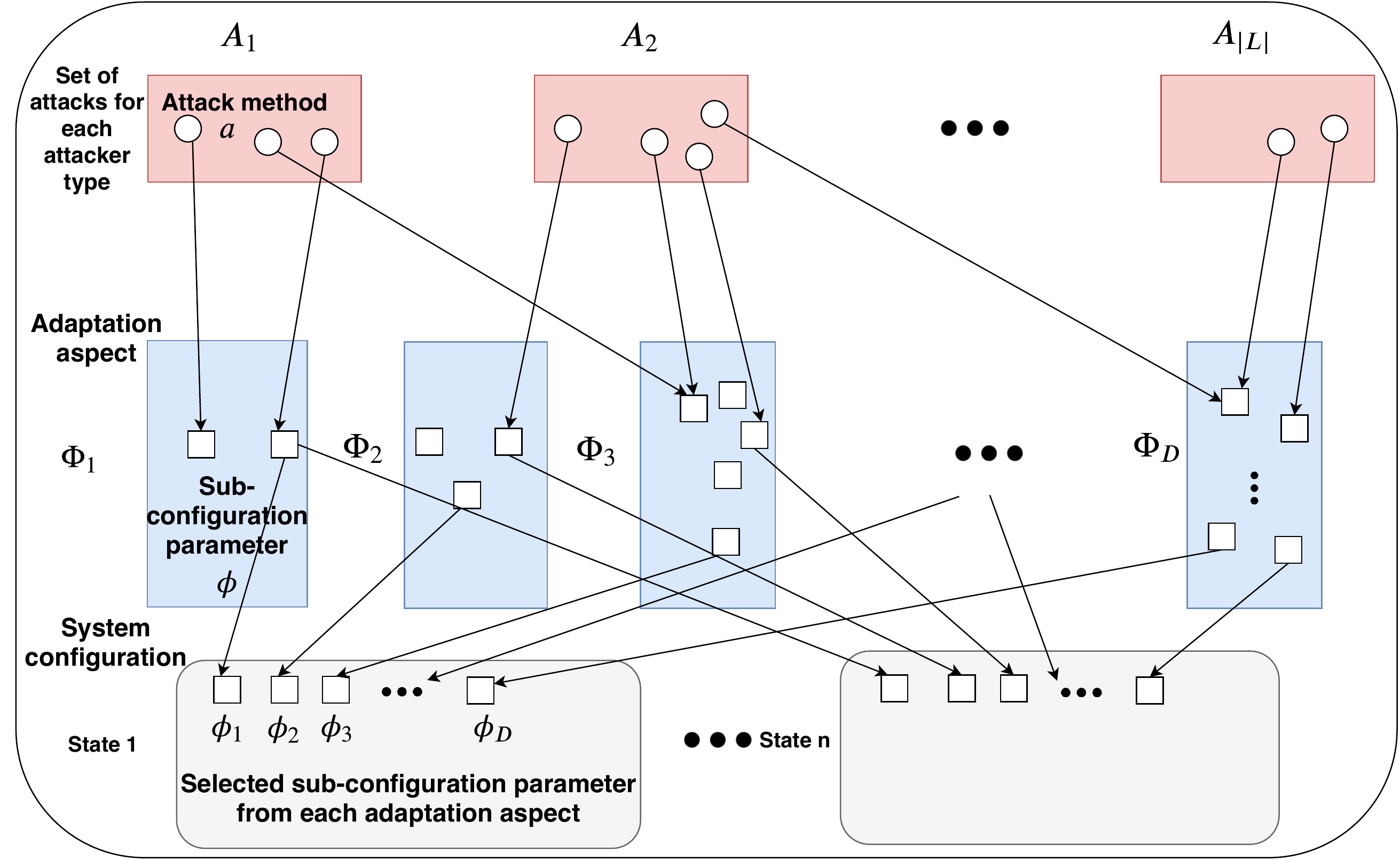}
\caption{An illustration of the relationship between attacker types, attack methods, adaptation aspects, sub-configuration parameters, and system configurations. }
\label{fig:relation}
\end{figure}

\subsubsection{Attacking Time}
If an attacker with the attacker type $l$ chooses an attack method $a \in A_{l}$ to attack the system in state $j$, then the time needed for him to compromise the system is a random variable $\xi_{a,j}$ that is drawn  from a given distribution $\Xi_{a,j}$. If $a$ is not targeting a vulnerability of any sub configuration in state $j$, $\xi_{a,j} = +\infty$. The attacker only gains benefits when he has successfully compromised the system. A system is considered to be compromised as long as the attacker compromises one of the sub configurations.
In this work, $L$, $\{A_l\}$, and $\{\Xi_{a,j}\}$ are assumed to be common knowledge.

\subsection{The Defender}
We consider a defender that proactively switches among different system configurations to increase the attacker's cost of mounting successful attacks. In practice, each migration incurs a cost to the defender. Frequent migrations often bring high cost while infrequent migrations make the system vulnerable to potential attacks. To make MTD feasible for deployment, the defender must also find the optimal time period to defend. An optimal MTD solution for the defender is to compute a strategy that simultaneously determines where to migrate and when to migrate.
\subsubsection{Migration Cost}
During the migration process, the defender updates the system to retake maintain the control of the system and pay some updating cost,
then selects and shifts the system from the current configuration state $i \in S$ to the next valid system configuration state $j \in S$ with a cost $m_{ij}$. We can consider that the migration is implemented instantaneously. This is without loss of generality because one can always assume that during the migration the system is still at state $i$. 
If the defender decides to stay at the current state $i$, then the cost is $m_{ii}$. 
Since there is a cost for updating the system, it requires that $m_{ii} >0$ for all $i,j \in S$. Let $M$ be the matrix of the migration cost between any states $i, j \in S$, we have $M = [m_{ij}]_{n  \times n}$. 
It is crucial to note that the migration cost $m_{ij}$ may depend on both the {\it source} configuration state $i$ and the {\it destination} configuration state $j$. 

\subsubsection{Defending Period}
The defender not only needs to determine which configuration state to use but also should decide when to move. If the defender stays in a configuration state sufficiently long, the attacker is likely to compromise the system even if the defender shifts eventually.  Thus, the defender needs to pick the defending period judiciously.
Let $t_{k}$ denote the time instance that the $k$th migration happens, the $k$th defending period is calculated by 
$t_{k} - t_{k-1}$. Intuitively, a defending period $\tau$ should be greater than zero, since the system needs some time to prepare the system migration or the updating. The defending period cannot be infinitely large because the system must switch the configurations periodically. Otherwise, it loses the benefits of moving target defense as the system will be compromised eventually and stay compromised after that. 
Therefore, it is natural to require that the defending period length $\tau$ has a lower bound $\underline{\tau} \in \mathbb{R}_{> 0}$ and an upper bound $\overline{\tau} \in \mathbb{R}_{> 0}$.  That is, $\tau \in [\underline{\tau}, \overline{\tau}]$. Note that the unit time cannot be infinitely small in practice. This allows us to discretize time for analytic convenience.

\subsubsection{Defender's Strategy}
In our model, the defender adopts a stationary strategy $\mathbf{u}$ that simultaneously decides where to move (depending on both the current state at $k$-th period and the next state at ($k+1$)-th period) and when to move (depending on the current state at $k$-th period). Let $p_{ij}$ denote the probability that the defender moves from state $i$ to state $j$, then the transition probabilities between any two states in $S$ can be represented by a transition matrix  $P=[p_{ij}]_{n\times n}$, where $\sum_{j\in S} p_{ij}=1$. When the system configuration state is $i$, the defender uses a mixed strategy $\mathbf{p}_i = (p_{ij})_{j \in S}$ to determine the next state to switch to.

In contrast, the defender needs to determine the next defending period $\tau_{i} \in [\underline{\tau}, \overline{\tau}]$ when she moves the system from current state  $i$ to the next state. The defending period matrix is thus $ \boldsymbol\tau = [\tau_{1}, \tau_{2}, \dots, \tau_{n}]$ where $i \in S$. In our model, we assume that the next ($k+1$)-th defending period only depends on the current system state $i$ at $k$-th period. There are two considerations: first, the transition probabilities $\mathbf{p}_{i}$ have already captured the differences between different destination states; second, it allows the defender's problem to be modeled as a semi-Markov decision process (defined in Section~\ref{sec:smdp}).
The defender's stationary strategy is denoted by $\mathbf{u} = (P, \boldsymbol\tau) = [u(i)]_{i \in S}$ where $u(i)=(\mathbf{p}_i,\tau_i)$ is the defender's action when the system's current configuration state is $i$. See Figure~\ref{fig:msgmtd} for an illustration of the system model. Note that the subscript $i$ in $\tau_i$ means the period length depends on state $i$ at $k$-th period. The time period $\tau_i$ is indifferent to the next state $j$ at ($k+1$)-th period for any $j\in S$.

\subsection{Markov Stackelberg Game}
In our paper, we model moving target defense as a Markov Stackelberg game (MSG) where the defender is the {\em leader} and the attacker is the {\em follower}. At the beginning the defender commits to a stationary strategy that is announced to the attacker. The attacker's goal is to maximize his expected reward by constantly attacking the system's vulnerabilities. The defender's objective is to minimize the total loss due to attacks plus the total migration cost.

The MSG model is an extension of the Stackelberg Security Game (SSG) model that has been widely adopted in security domains~\cite{sinha2018stackelberg}. A key advantage of using the Stackelberg game model is that it enables the defender acting first to commit to a mixed strategy while the attacker selects his best response after observing the defender's strategy~\cite{korzhyk2011stackelberg}. This advantage allows the defender to implement defense strategies prior to potential attacks~\cite{li2019optimal}. In MTD, the defender proactively switches the system between different configurations to increase the attacker's uncertainty. It is thus natural to model the defender as the leader and the attacker as the follower.

\begin{figure}[t]
\centering
 \includegraphics[width=.85\linewidth]{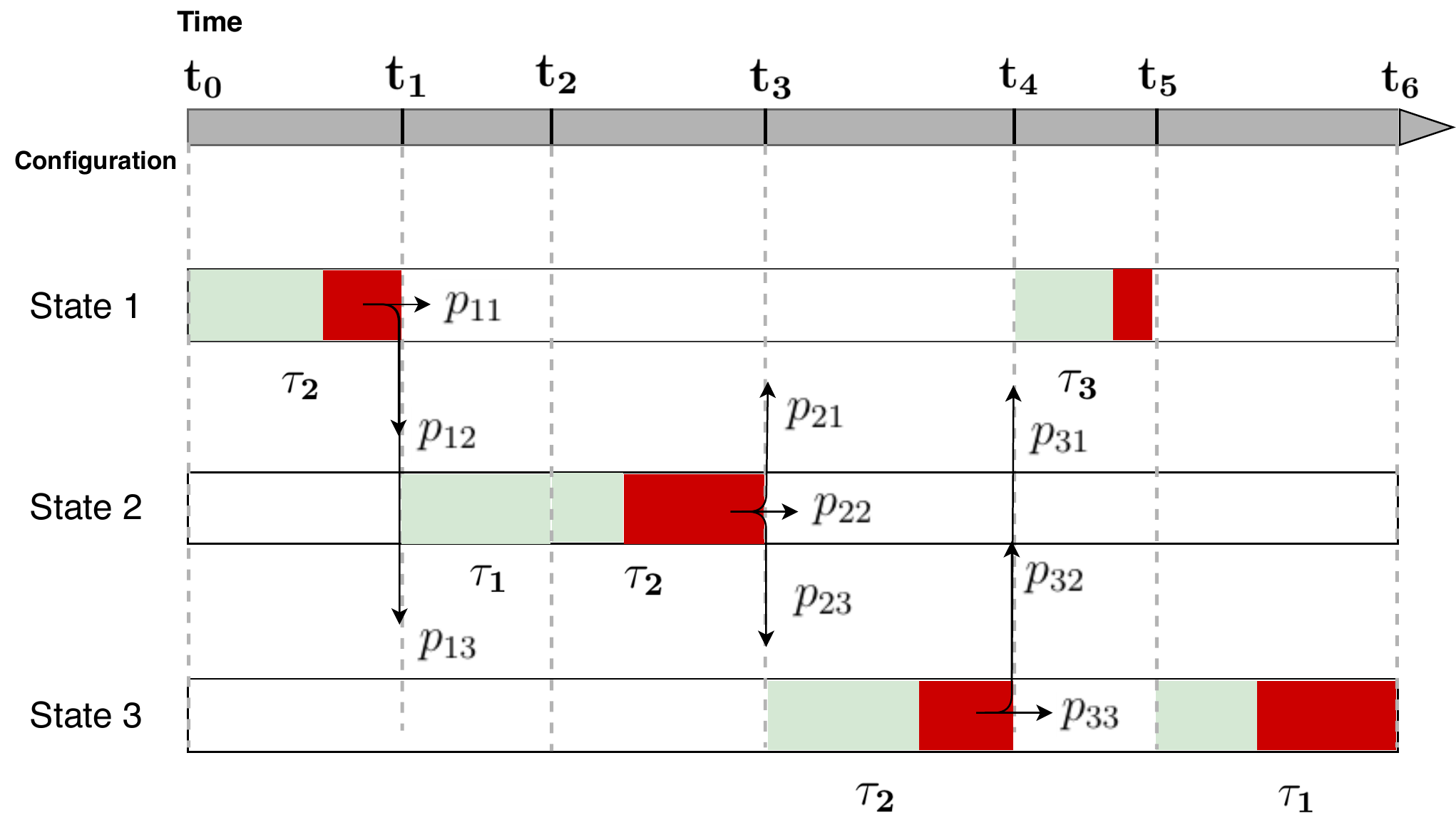}
\caption{An illustration of the Markov Stackelberg game model. The first defending period (between $t_0$ and $t_1$) depends on the initial state (assuming State $2$ is the initial configuration state). 
A light green block represents the time period when the system is protected while a dark red block denotes the time period when the system is compromised. Here $p_{ij}$ is the probability that the  defender moves from configuration state $i$ to $j$, and $\tau_{i} $ is the length of the current defending period when the previous configuration is $i$. }
\label{fig:msgmtd}
\end{figure}

\subsubsection{Information Structure}
In our model, we consider a stealthy and persistent attacker that learns the defender's stationary strategy ${\bf u}$.
Note that even if the defender's strategy is not announced initially, the attacker will learn the system states (e.g., through probing) and the defender's strategy eventually due to the stealthy and persistent nature of the attacks. Thus, it is without loss of generality to assume that the defender announces her stationary strategy to the attacker before the game starts. 
We further assume that the attacker always learns the previous state of the ($k-1$)-th period at the beginning of the $k$-th period no matter the attack is successful or not in the previous ($k-1$)-th period.
Hence, at the beginning of each stage, the only uncertainty to the attacker is the current configuration state (it knows the previous state and from that, it also knows when the current defending period will end as the length of which is determined by the previous state). Note that is a worse-case scenario from the defender's perspective. 

In many security domains, it is often difficult for the defender to obtain real-time feedback about whether the system is compromised or not because the attacker is likely to be stealthy. The defender, however, may have some prior knowledge about the attacker type distribution~\cite{paruchuri2008playing,sengupta2017game}. Thus, we assume that the defender has a prior belief $\mathbf{\pi}=\{\pi_l\}$ on the probability distribution of the attacker type $l$.

\subsubsection{Attacker's Optimization Problem}
 
The attacker chooses an attack method at the beginning of each stage to maximize its long-term reward. In our model, the defender adopts a stationary strategy that is known to the attacker. Thus, the long-run optimality can be achieved if the attacker always follows a best response in each stage. Hence, we consider a myopic attacker that aims to maximize his benefits by always using a best response to the defender's strategy according to its knowledge on the previous system configuration state. 

Consider an arbitrary stage and assume that the previous configuration state at the ($k-1$)-th period is $i$ and the current configuration state at the $k$-th period is $j$. For an attack $a$ to be successful in this stage, the time $\xi_{a,j}$ required for the attacker when using attack method $a$ targeting state $j$ should be less than the length of the defending period $\tau_{i}$.
Let $R^l_{a,j}$ denote the attacker's benefit per unit of time when the system is compromised, which is jointly determined by the attacker's type $l$, its chosen attack method $a$, and the state $j$. The reward that the attacker receives is then $(\tau_i-\xi_{a,j})^+R^l_{a,j}$ where $(x)^+ \triangleq \max(0,x)$.

The optimization problem for the attacker with type $l$ is to maximize his expected reward per defending period by choosing an attack method $a$ from his attack space $A_{l}$:
\begin{equation}
\max_{a\in A_l}\sum_{j\in S}p_{ij}\mathbb{E}[(\tau_i-\xi_{a,j})^+]R^l_{a,j}
\end{equation}
where $\tau_i$ and $\{p_{ij}\}$ are the defending period length and the transition probabilities given by the defender under state $i$, respectively.

\subsubsection{Defender's Optimization Problem}\label{sec:smdp}
We consider a defender that constantly migrates among the system configurations  in order to 
minimize the total loss due to attacks plus the total migration cost. We use an average cost semi-Markov decision process (SMDP) 
to model the defender's optimization problem. The SMDP model considers a defender that aims to minimize her long-term defense cost in an infinite time horizon using spatial-temporal decision making. 

 Let $C^l_{a,j}$ denote the unit time loss for the defender under state $j$ due to an attack $a$ launched by a type $l$ attacker. 
 The defender's action at state $i$ is $u(i)=(\mathbf{p}_i,\tau_i)$ and her expected single period cost is:
\begin{align}
    c(i, u(i)) =  \sum_{l \in L}\pi_l\left(\sum_{j \in S} p_{ij}\mathbb{E}[(\tau_i-\xi_{a^l,j})^+]C^l_{a^l,j}\right) + \sum_{j \in S} p_{ij}m_{ij} \nonumber \\
    s.t.\quad a^l = \arg\max_{a\in A_l}\left(\sum_{j\in S} p_{ij}\mathbb{E}[(\tau_i-\xi_{a,j})^+]R^l_{a,j}\right), \quad \forall l\in L
\end{align}
where the first part in the objective function is the expected attacking loss 
and the second part 
is the expected migration cost. 

The game starts at $t_{0} = 0$ and let $s_0$ be the initial state that is randomly selected from the state space $S$.  
The defender adopts a stationary policy $u$ where $u(s_k)=(\mathbf{p}_{s_k},\mathbf{\tau}_{s_k})$ for each state $s_k$, which generates an expected cost $c(s_k,u(s_k))$ that includes potential loss from compromises and migrations. Given the initial state $s_{0}$, the defender's long-term average cost is defined as:  
\begin{equation}
    z(s_0,u(s_0)) = \liminf_{N\rightarrow\infty}\frac{\sum^{N-1}_{k=0} c(s_k,u(s_k))}{\sum^{N-1}_{k=0}{\tau_{s_k}}}
\end{equation}

The goal of the defender is to commit to a stationary policy $\mathbf{u^*}=[u^*(i)]_n$ that minimizes the time-average cost for any initial state.
$z(i,u^*(i)) = \inf_{\mathbf{u}}z(i,u(i))$, $\forall i \in S$
For each $u(i)=(\mathbf{p}_i,\tau_i)$, we have $p_{ij}\in [0,1]$ for all $j$, $\sum_jp_{ij}=1$, and $\tau_i\in [\underline{\tau},\overline{\tau}]$. Thus, the action space $U(i)$ for every $u(i)$ is $[0,1]^n\times [\underline{\tau},\overline{\tau}]$, which is a continuous space.
We assume that $c(i,u(i))$ is continuous over $U(i)$. 
The defender's optimization problem corresponds to finding a strong Stackelberg equilibrium~\cite{breton1988sequential} where the defender commits to an optimal strategy assuming that the attacker will choose the best response to the defender's strategy and break ties in favor of the defender. This is a common assumption in Stackelberg security game literature.

\subsubsection{Challenges of Computing Optimal MTD Strategies}
The are two main challenges for the defender to compute the optimal strategies. First,
when an arbitrary transition matrix $P$ is allowed, the Markov chain associated with the given $P$ may have a complicated chain structure. 
The optimal solution may not exist and standard methods such as Value Iteration (VI) and Policy Iteration (PI)~\cite{puterman2014markov} may never converge when applied to an average-cost SMDP with a continuous action space. Second, a bilevel optimization problem needs to be solved in each iteration of VI or PI, which is challenging due to the infinite action space and the coupling of spatial and temporal decisions.

\subsection{MSG versus SSG}

Our Markov Stackelberg game model extends the classic Stackelberg Security Game (SSG)~\cite{paruchuri2008playing} in important ways.
In the classic SSG model, there is a set of targets and a defender has a limited amount of resources to protect them. The defender serves as the leader and commits to a mixed strategy. The attacker observes the defender's strategy (but not her action) and then responds accordingly. Thus, the SSG model is essentially a one-shot game. The SSG model has been extended to Bayesian Stackelberg Game (BSG) model to capture multiple attack types where the defender knows the distribution of attack types a prior as we assumed. In a recent work~\cite{sinha2018stackelberg}, the BSG model has been used to model moving target defense where only the spatial decision is considered and the defender commits to a vector $[p_j]_n$ where $p_j$ is the probability of moving to configuration state $j$ in the next stage, which is independent of the current configuration state. 

We note that the BSG model in~\cite{sinha2018stackelberg} is a special case of our MSG model. Specifically, let $\tau_{i} = 1$ for all $i \in S$ and $\xi^l_{a^l,j} = 0$ for all $a^{l} \in A^{l}, j \in S$. We further assume 
that $p_{ij} = p_j$ for all $j \in S$. Then the SMDP becomes an MDP with state independent transition probabilities. Since each row of the transition matrix is the same, the stationary distribution of the corresponding Markov chain is just $[p_j]_n$. Therefore, the average-cost SMDP reduces to the following one-stage optimization problem to the defender:
\begin{align}
    \min_{\mathbf{p}}\sum_l\pi_l\Big(\sum_j p_jC^l_{a^l,j}\Big) + \sum_{i,j}p_ip_{j}m_{ij} \nonumber\\
    s.t.\quad a^l = \arg\max_{a\in A_l}\Big(\sum_j p_jR^l_{a,j}\Big)\quad \forall l\in L
    \label{eq:defenderoptimization}
\end{align}
This is exactly the BSG model for MTD in~\cite{sinha2018stackelberg}. In the BSG variant, the transition probabilities depend on only the destination state. It corresponds to a special case in our model when the transition probabilities in each row are the same.
This simplified MTD strategy is optimal only if the migration cost depends on the destination state only but not the source state. 

Our MSG model enables the defender to handle the complex scenarios when the migration cost is both source and destination dependent. It also takes the defending period into account when in computing the optimal defense strategy. This consideration is useful because a stealthy and persistent attacker will compromise the system eventually if the system stays in a state longer than the corresponding attacking time. 

\section{Optimal Moving Target Defense}
This section presents an efficient solution to the defender's optimization problem in spatial-temporal moving target defense.  We first show that the original average-cost SMDP problem can be transformed into a discrete time Markov decision process (DTMDP) problem using a data transformation method. We then introduce a 
value iteration algorithm to solve the DTMDP problem and prove that the algorithm converges to a nearly optimal MTD policy. The algorithm involves solving a bilevel optimization problem in each iteration, which can be formulated as a mixed integer quadratic program. 

\subsection{Assumptions}
Before we present our solution, we first make two assumptions. 

\begin{assumption}
The transition probability matrix $P$ can be arbitrarily chosen by the defender. 
\label{asmp:transition}
\end{assumption}

\begin{assumption}
Given $\underline{\tau}>0$ as the lower bound of the defending period length, the defender's cost per unit time $c(i,u(i))/\tau_i$ is continuous and bounded over $U(i)$ for each $i$.
\label{asmp:bound}
\end{assumption}

Both assumptions are reasonable and can be easily satisfied. 
Assumption~\ref{asmp:transition} implies an important structure property of the SMDP as formally defined below. 

\begin{definition} [Communicating MDP~\cite{puterman2014markov}]
For every pair of states $i$ and $j$ in $S$, there exists a deterministic stationary policy $\mathbf{u}$ under which $j$ is accessible from $i$, that is, $\Pr(s_k=j|s_0=i, \mathbf{u})>0$ for some $k\geq 1$.    
\label{def:communicationmdp}
\end{definition}

It is easy to check that the SMDP in our problem is communicating under Assumption~\ref{asmp:transition}.
It is known that for a communicating MDP, the optimal average cost is a constant, independent of the initial state~\cite{puterman2014markov}. This property significantly simplifies the algorithm design and analysis as we discuss below. Assumption~\ref{asmp:bound} is used in establishing the convergence of the value iteration algorithm under the continuous action space (see Section~\ref{sec:analysis}).  

\subsection{Data Transformation}
Solving the defender's optimization problem requires the algorithm to simultaneously determine the optimal transition probabilities 
and the optimal defending periods. 
The average-cost SMDP problem with continuous action space is known to be difficult to solve~\cite{jianyong2004average}.  Fortunately, one can apply the data transformation method introduced by \citeauthor{schweitzer1971iterative}~\cite{schweitzer1971iterative} to transform the average-cost SMDP problem into a discrete-time average Markov decision process (DTMDP) problem. The DTMDP has a simpler structure than the SMDP with the same state space $S$, and action sets $U(i)=[u(i)]_{n}$, where $u(i)=(\mathbf{p}_i,\tau_i)$. 
The defender's per-stage cost $c(i, u(i))$ is converted to 
\begin{equation}
\tilde{c}(i,u(i)) = \frac{c(i,u(i))}{\tau_i}
\label{eq:tildeciui}
\end{equation}

\ignore{
\begin{equation}
\tilde{c}(i,u(i)) =\frac{\sum_l\pi_l\max_{a\in A_l}(\sum_{j}p_{ij} E[(\tau_i-\xi_{a,j})^+]R^l_{a,j} ) + \sum_{j} p_{ij}m_{ij}}{\tau_i}
\label{eq:tildeciui}
\end{equation}
}

Further, the transition probability from state $i$ to state $j$ for the DTMDP is 
\begin{equation}
\tilde{p}_{ij} ( u(i)) = \gamma \frac{p_{ij}-\delta_{ij}}{\tau_i} + \delta_{ij} 
\label{eq:tildepij}
\end{equation}
where 
$\delta_{ij}$ denotes the Kronecker delta (i.e., $\delta_{ii}=1$ and $\delta_{ij}=0$ for all $j\neq i$) and $\gamma$ is a parameter that satisfies $0 < \gamma < \underline{\tau} \leq \frac{\tau_i}{1-p_{ii}}$, where $\underline{\tau}$ is the lower bound of the defending period length. Let $\tilde{P}(\mathbf{u})=[\tilde{p}_{ij}(u(i))]_{n\times n}$ denote the transition probability matrix of the DTMDP and $\mathbf{\tilde{c}}(\mathbf{u})=[\tilde{c}(i,u(i))]_n$ the defender's per-stage cost across all the states. 
If the system starts from the initial state $s_{0} \in S$, then the long-term average cost becomes
\begin{align}
\tilde{z}(s_0,u(s_0)) &= \limsup_{N\rightarrow\infty}\frac{1}{N}\sum^{N-1}_{k=0} \tilde{c}(s_k,u(s_k)) 
\end{align}

The above data transformation has some nice properties as summarized below.
\begin{theorem}[Theorems 5.2 and 5.3 of ~\cite{jianyong2004average}]
Suppose an SMDP is transformed into a DTMDP using the above method. We have
\begin{enumerate}
    \item If SMDP is communicating, then DTMDP is also communicating.
    \item If SMDP is communicating, then a stationary optimal policy for DTMDP is also optimal for SMDP.
\end{enumerate}
\label{thm:datatransformation}
 \end{theorem}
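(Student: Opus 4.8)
\emph{Proof plan.} The plan is to treat the two claims separately, first establishing that the Schweitzer transformation preserves the one-step reachability structure of every stationary policy (giving part 1), and then showing that the average-cost optimality equations of the SMDP and the DTMDP are algebraically equivalent, so that they share the same minimizing actions (giving part 2). As a preliminary I would verify that $\tilde{P}(\mathbf{u})$ is a genuine stochastic matrix: each row sums to one because $\sum_j(p_{ij}-\delta_{ij})=0$, the off-diagonal entries $\tilde{p}_{ij}=\gamma p_{ij}/\tau_i$ are nonnegative, and the diagonal entry $\tilde{p}_{ii}=1-\gamma(1-p_{ii})/\tau_i$ is strictly positive because the choice $0<\gamma<\underline{\tau}\le \tau_i/(1-p_{ii})$ forces $\gamma(1-p_{ii})/\tau_i<1$.

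For part 1, the decisive observation is that the transformation changes the \emph{magnitude} but not the \emph{sign pattern} of the off-diagonal transitions: for $j\neq i$ one has $\tilde{p}_{ij}>0$ if and only if $p_{ij}>0$, while the diagonal entries $\tilde{p}_{ii}$ are strictly positive for every action. Hence, fixing any deterministic stationary policy $\mathbf{u}$, the directed graph on $S$ induced by $\tilde{P}(\mathbf{u})$ coincides with that induced by $P(\mathbf{u})$ up to the addition of self-loops, which do not affect accessibility between distinct states. Consequently $j$ is accessible from $i$ under $\mathbf{u}$ in the SMDP exactly when it is accessible under $\mathbf{u}$ in the DTMDP, and the communicating property of Definition~\ref{def:communicationmdp} transfers directly.

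For part 2, I would invoke the fact (valid because both chains are communicating, the action sets $U(i)$ are compact, and the per-unit-time cost is continuous and bounded by Assumption~\ref{asmp:bound}) that the DTMDP admits a bounded solution $(\tilde{g},\tilde{h})$ to its average-cost optimality equation
\[
\tilde{g}+\tilde{h}(i)=\min_{u(i)}\Big\{\tilde{c}(i,u(i))+\sum_{j}\tilde{p}_{ij}(u(i))\,\tilde{h}(j)\Big\}.
\]
Substituting the definitions \eqref{eq:tildeciui} and \eqref{eq:tildepij} and collecting terms, the bracket equals $\tfrac{1}{\tau_i}\big[c(i,u(i))+\gamma\sum_j p_{ij}\tilde{h}(j)-\gamma\tilde{h}(i)\big]+\tilde{h}(i)$. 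Cancelling the action-independent $\tilde{h}(i)$ and writing $h:=\gamma\tilde{h}$, $g:=\tilde{g}$, the equation reduces to $g=\min_{u(i)}\tfrac{1}{\tau_i}\big[c(i,u(i))+\sum_j p_{ij}h(j)-h(i)\big]$. Because $\tau_i>0$, this is equivalent to $c(i,u(i))+\sum_j p_{ij}h(j)-h(i)\ge g\,\tau_i$ for every action, with equality at the minimizer, i.e.\ to the SMDP optimality equation $h(i)=\min_{u(i)}\{c(i,u(i))-g\,\tau_i+\sum_j p_{ij}h(j)\}$, and the two minimizations have identical argmin sets. A stationary policy optimal for the DTMDP attains the DTMDP minimum in every state, hence attains the SMDP minimum in every state, so it satisfies the SMDP optimality equation and is optimal for the SMDP with the same gain $g=\tilde{g}$.

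The step I expect to be the main obstacle is the algebraic reduction in part 2, specifically handling the fact that $\tau_i$ is itself a decision variable inside the minimization. One cannot simply clear the denominator $\tau_i$ out of the $\min$; instead I rely on the positivity $\tau_i\ge\underline{\tau}>0$ to pass from ``$\min_{u(i)}F(u(i))/\tau_i=g$ attained at $u^*$'' to ``$\min_{u(i)}\{F(u(i))-g\tau_i\}=0$ attained at the same $u^*$,'' and to check that the two argmin sets genuinely coincide rather than merely overlap. The remaining technical point, the existence of a bounded solution to the average-cost optimality equation over a continuous action space, is where Assumption~\ref{asmp:bound} (boundedness and continuity of $c(i,u(i))/\tau_i$) together with compactness of $U(i)$ does the work.
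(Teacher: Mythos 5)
The paper itself gives no proof of this theorem---it is imported verbatim from Theorems 5.2 and 5.3 of \cite{jianyong2004average}---so your attempt must be judged on its own merits rather than against an in-paper argument. Your Part 1 is correct: for $j\neq i$ we have $\tilde{p}_{ij}=\gamma p_{ij}/\tau_i$, which is positive exactly when $p_{ij}$ is, and $\tilde{p}_{ii}=1-\gamma(1-p_{ii})/\tau_i\geq 1-\gamma/\tau_i>0$ since $\gamma<\underline{\tau}\leq\tau_i$; hence the digraphs induced by $P(\mathbf{u})$ and $\tilde{P}(\mathbf{u})$ agree up to self-loops and accessibility transfers. The algebraic core of Part 2 is also sound: substituting \eqref{eq:tildeciui} and \eqref{eq:tildepij}, cancelling $\tilde{h}(i)$, rescaling $h=\gamma\tilde{h}$, and using $\tau_i\geq\underline{\tau}>0$ to pass between ``$\min_u F(u)/\tau_u=g$'' and ``$\min_u\{F(u)-g\tau_u\}=0$'' with identical argmin sets is a correct and clean equivalence of the two optimality equations (your appeal to the existence of a bounded solution of the DTMDP optimality equation under Assumption~\ref{asmp:bound} and compactness is doing real work, but it is the same work the paper does separately via Lemma~\ref{lmm:bounded}, so it is defensible).

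The genuine gap is your final implication: ``a stationary policy optimal for the DTMDP attains the DTMDP minimum in every state.'' Under the average-cost criterion this is false, even for communicating MDPs: an optimal stationary policy may take strictly suboptimal actions at states that are transient under its own chain, because any finite excess cost incurred on transient states vanishes in the long-run average. Concretely, take $S=\{1,2\}$; at state $1$ the actions are (cost $0$, go to $2$) and (cost $100$, go to $2$); at state $2$ the actions are (cost $0$, stay) and (cost $0$, go to $1$). This MDP is communicating with optimal gain $0$, and the policy that pays $100$ at state $1$ and stays at $2$ is average-cost optimal, yet it does not attain the minimum of the optimality equation at state $1$. Consequently your argument proves only the weaker statement that the two models share the same optimal gain and that any policy attaining the minima (which exists) is optimal for both---not that \emph{every} DTMDP-optimal stationary policy is SMDP-optimal, which is what claim 2 asserts. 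The standard repair is per-policy gain preservation: for fixed stationary $\mathbf{u}$, write $\tilde{P}(\mathbf{u})=I+\gamma\Delta^{-1}(P(\mathbf{u})-I)$ with $\Delta=\mathrm{diag}(\tau_i)$; then $P(\mathbf{u})$ and $\tilde{P}(\mathbf{u})$ have the same recurrent classes and the same absorption probabilities from transient states (the defining linear systems coincide), and on each recurrent class the stationary distributions satisfy $\tilde{\pi}_i\propto\pi_i\tau_i$, which gives $\sum_i\tilde{\pi}_i c_i/\tau_i=\bigl(\sum_i\pi_i c_i\bigr)/\bigl(\sum_i\pi_i\tau_i\bigr)$. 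Hence the gain vector of $\mathbf{u}$ in the DTMDP equals its gain vector in the SMDP, and optimality transfers policy by policy, which is claim 2.
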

 
Theorem~\ref{thm:datatransformation} indicates that the transformed DTMDP also has a contant optimal cost and further, to find a stationary optimal policy for the SMDP in our problem, it suffices to find a stationary optimal policy for the transformed DTMDP.

\subsection{Value Iteration Algorithm}
We adapt the {\em Value Iteration} (VI) algorithm~\cite{bertsekas1995dynamic,bather1973optimal} to solve the defender's problem and prove that the algorithm converges to a nearly optimal MTD policy. 
Before presenting the algorithm and the theoretical analysis, we introduce additional notations. 

\subsubsection{Additional Notations}
Let $V$ be any vector in $\mathbb{R}^n$. We define the mapping $F: \mathbb{R}^n \rightarrow \mathbb{R}^n$ as:
\begin{equation}
F(V) = \min_{\mathbf{u}}{[\tilde{\mathbf{c}}(\mathbf{u})+\tilde{P}(\mathbf{u})V]} \label{eq:F}
\end{equation}

\noindent where the minimization is applied to each state $i$ separately. For any vector $\mathbf{x}=(x_1,x_2,\dots,x_n)\in \mathbb{R}^n$, let $L(\mathbf{x}) = \min_{i=1,...,n}x_i$ and $H(\mathbf{x}) = \max_{i=1,...,n}x_i$.
Let $\|\cdot\|$ denotes the span seminorm defined as follows:
\begin{equation}
    \norm{\mathbf{x}}= H(\mathbf{x})-L(\mathbf{x})
\end{equation}
It is easy to check that $\|\cdot\|$ satisfies the triangle inequality, that is, $\norm{\mathbf{x}-\mathbf{y}}\leq \norm{\mathbf{x}}+\norm{\mathbf{y}}$ for any $\mathbf{x},\mathbf{y}\in \mathbb{R}^n$.
Further, $\norm{\mathbf{x}-\mathbf{y}}=0$ if and only if there exists a scalar $\lambda$ such that
$x-y = \lambda \mathbf{e}$ where $\mathbf{e}$ is 
an vector of all ones.
Thus, there is a vector $V$ such that $\norm{F(V)-V}=0$ ($V$ is called a fixed point of $F(\cdot)$) if and only if there is a scalar $\lambda$ such that the following optimality equation is satisfied:
\begin{equation}
\lambda \mathbf{e} + V = \min_{\mathbf{u}}{[\tilde{\mathbf{c}}(\mathbf{u})+\tilde{P}(\mathbf{u})V]}
\label{eq:optimalityequation}
\end{equation}
An important result in MDP theory~\cite{puterman2014markov,bertsekas1995dynamic} is that the stationary policy $\mathbf{u}$ that attains the minimum in the optimality equation~\eqref{eq:optimalityequation} is optimal and $\lambda$ gives the optimal long-term average cost.  

\ignore{We consider the relative value iteration where $V^{0} \in \mathbb{R}^{n}$,
\begin{equation}
W = V - L(V)\mathbf{e}
\end{equation}
If $L(V) = 0$, then $W$ and $y = x - L(x)\mathbf{e}$, then $\norm{y} = H(y)$ for any $x \in \mathbb{R}^n $. If $\{V\}$ is bounded, it follows that the $\{W\}$ is also bounded. The sequence $\{V\}$ is generated by $\kappa_{0} = \frac{1}{2}, V^{t} = F(\kappa_{t-1}V^{t-1})$, $\kappa_{t} = \frac{t}{t+1}$, where $t = 1, 2, \dots$.
}

\subsubsection{The VI Algorithm}
The VI algorithm (See Algorithm~\ref{alg:vi}) maintains a vector $V^t \in \mathbb{R}^n$. The algorithm starts with an arbitrary $V^{0}$ (line 1) and a carefully chosen sequence $\{\kappa_t\}$ (line 2) that ensures every limit point of $\{V^{t}\}$ is a fixed point of $F(V)$ (See Section~\ref{sec:analysis}). In each iteration, $V^t$ is updated by solving a policy improvement step ($V^t=\textbf{PImp}(S, V^{t-1},\underline{\tau}, \overline{\tau}, M, C, R, \pi, L,  \{A_l\}, \delta, \kappa_{t-1})$) in line 5.

In each policy improvement step (Algorithm ~\ref{alg:pi}), instead of finding the optimal $\mathbf{p}_i$ and $\tau_i$ together for each state $i$, which is a challenging problem, we discretize $[\underline{\tau},\overline{\tau}]$ and search for $\tau_i$ with a step size $\delta$ (line 3).
This approximation is reasonable since in practice the unit time cannot be infinitely small. Note that the smaller $\delta$ is, the closer $\mathbf{\tau}^*$ (line 7) is to the optimal one. Also note that the optimization problem in line 4-5 is actually a bilevel problem, which will be discussed in detail in Section~\ref{sec:bilevel}.

Under Assumptions~\ref{asmp:transition} and~\ref{asmp:bound}, the algorithm ~\ref{alg:vi} stops in a finite number of iterations (lines 6-8) and 
is able to find a near-optimal policy $P^{*}, \tau^{*}=\arg \textbf{PImp}(S, V^{t-1},\underline{\tau}, \overline{\tau}, M, C, R, \pi, L,  \{A_l\}, \delta)$ 
(formally proved in Section~\ref{sec:analysis}). 
In practice, a near-optimal solution is sufficient because it can be expensive or even unrealistic to obtain the exact minimum average cost in a large-scale MDP. 
The algorithm itself, however, still attains the optimal solution if the number of iterations goes to infinity (and $\delta$ approaches 0).
\begin{algorithm}[t]
\caption{Value Iteration algorithm for the MTD game}
\begin{algorithmic}[1]
\Statex \textbf{Input:} $S, n, \epsilon>0,\underline{\tau}, \overline{\tau}, M, C, R, \pi, L,  \{A_l\}, \delta>0.$
\Statex \textbf{Output:} $P^{*}, \tau^{*}$
\State $V^{0}\in \mathbb{R}^n;$
\State $\kappa_0=\frac{1}{2},$ $\kappa_t=\frac{t}{t+1}$ for $t=1,2,\dots.$
\Repeat
\State  $t=t+1$;
\State $V^t=\textbf{PImp}(S, V^{t-1},\underline{\tau}, \overline{\tau}, M, C, R, \pi, L,  \{A_l\}, \delta, \kappa_{t-1})$
\State $\overline{V}=\max_{i\in S}{|V^{t}(i)-V^{t-1}(i)|};$
\State $\underline{V}=\min_{i\in S}{|V^{t}(i)-V^{t-1}(i)|};$
\Until {$\overline{V}-\underline{V} < \epsilon$}
\State $P^{*}, \tau^{*}=\arg\textbf{PImp}(S, V^{t-1},\underline{\tau}, \overline{\tau}, M, C, R, \pi, L,  \{A_l\}, \delta)$
\end{algorithmic}
\label{alg:vi}
\end{algorithm}

\begin{algorithm}[t]
\caption{Policy Improvement (\textbf{PImp})}
\begin{algorithmic}[1]
\Statex \textbf{Input:} $S, V_0, \underline{\tau}, \overline{\tau}, M, C, R, \pi, L, \{A_l\}, \delta, \kappa$
\Statex \textbf{Output:} $V_t$
\For{$i\in S$}
\State $v=+\infty;$
\For{$\tau=\underline{\tau};\tau\leq\overline{\tau};\tau=\tau+\delta$}
\State $V_t(i)=\min_{\mathbf{p}_i}[\tilde{c}(i, \tau, M, C, R, \pi, L,  \{A_l\})$
\State $+ \kappa\sum_{j\in S} \tilde{p}_{ij}(\mathbf{p}_i, \tau)V_0(j)]$
\If{$V_t(i) < v$}
\State $ v = V_t(i);$
\EndIf
\EndFor
\State  $V_t(i)=v$;
\EndFor
\end{algorithmic}
\label{alg:pi}
\end{algorithm}

\subsubsection{Theoretical Analysis}\label{sec:analysis}
Our VI algorithm is adapted from the work due to \citeauthor{bertsekas1995dynamic}~\cite{bertsekas1995dynamic} and  \citeauthor{bather1973optimal}~\cite{bather1973optimal} that originally addresses the average cost MDP problem with a finite state space and an infinite action space. However, their proofs do not directly apply to our algorithm because they either consider the transition probabilities as the only decision variables~\cite{bather1973optimal} or involve the use of randomized controls~\cite{bertsekas1995dynamic}. 
In contrast, our strategy includes both the probability transition matrix and the (deterministic) defending periods. 

For a given stationary policy $\mathbf{u}$ with transition matrix $\tilde{P}$, let $\tilde{P}^*$ denote the Cesaro limit given by $\tilde{P}^{*} = \lim_{N\to\infty}  \{I + \tilde{P} + \tilde{P}^{2} + \dots + \tilde{P}^{N-1}\}/N$. Then the average cost associated with $\mathbf{u}$ can be represented as $\tilde{P}^*\tilde{c}(\mathbf{u})$\cite{puterman2014markov}. The policy $\mathbf{u}$ is called {\it $\epsilon$-optimal} if $\tilde{P}^*\tilde{c}(\mathbf{u}) \leq \lambda + \epsilon \mathbf{e}$ where $\lambda$ is the optimal cost vector. 
In practice, it is often expensive or even unrealistic to compute an exact optimal policy and an $\epsilon$-optimal policy might be good enough. 
\ignore{
Instead, it is reasonable to seek a near-optimal policy that has some guarantee. In our paper, we consider an $\epsilon$-optimal policy (See Definition~\ref{def:epsilonoptimal}).
\begin{definition}
Given $\epsilon> 0$, a stationary policy $\mathbf{u} = (P, \tau)$ is $\epsilon$-optimal if  $\tilde{P}(P, \tau)\mathbf{\tilde{c}(P, \tau)} \leq \lambda + \epsilon \mathbf{e}$
where $\lambda = \min_{u \in U} \tilde{P}(P, \tau)\tilde{c}(P, \tau)$. 
\label{def:epsilonoptimal}
\end{definition}
}
Our main results can be summarized as follows. 

\begin{theorem}
Under Assumptions~\ref{asmp:transition} and~\ref{asmp:bound}, we have
\begin{enumerate}
\item The DTMDP problem (thus the SMDP problem too) has an optimal stationary policy;
\item The sequence of policies in Algorithm~\ref{alg:vi} eventually leads to an $\epsilon$-optimal policy. 
\end{enumerate}
\label{thm:riv}
\end{theorem}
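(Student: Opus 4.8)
The plan is to run the whole analysis on the transformed DTMDP and transfer the conclusions to the SMDP through Theorem~\ref{thm:datatransformation}. I would begin by recording the structural properties of the operator $F$ in~\eqref{eq:F}. A one-line computation on~\eqref{eq:tildepij} shows each transformed kernel is stochastic: the $i$th row of $\tilde P(\mathbf u)$ sums to $\gamma(\sum_j p_{ij}-1)/\tau_i+1=1$, while the constraint $0<\gamma<\underline\tau\le\tau_i/(1-p_{ii})$ forces $\tilde p_{ii}=1-\gamma(1-p_{ii})/\tau_i>0$ and $\tilde p_{ij}=\gamma p_{ij}/\tau_i\ge 0$ for $j\neq i$; hence $\tilde P(\mathbf u)\ge 0$, $\tilde P(\mathbf u)\mathbf e=\mathbf e$, and every induced chain carries a self-loop and is therefore aperiodic. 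Because the feasible set $U(i)$ (the probability simplex in $\mathbf p_i$ times $[\underline\tau,\overline\tau]$) is compact and, by Assumption~\ref{asmp:bound}, $\tilde c(i,u(i))$ and $\tilde p_{ij}(u(i))$ are continuous and bounded on it, the minimum defining $F(V)$ is attained for every $V$, so $F$ is well defined and admits a greedy policy. Stochasticity of $\tilde P$ then yields the three facts that drive everything: $F$ is monotone, translation invariant ($F(V+c\mathbf e)=F(V)+c\mathbf e$), and consequently nonexpansive both in $\norm{\cdot}$ (the span seminorm) and in $\|\cdot\|_\infty$.

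For both parts I would exploit the identity $F(\kappa V)=\min_{\mathbf u}[\tilde{\mathbf c}(\mathbf u)+\kappa\tilde P(\mathbf u)V]$, i.e.\ $V\mapsto F(\kappa V)$ is exactly the $\kappa$-discounted Bellman operator $T_\kappa$, a $\kappa$-contraction in $\|\cdot\|_\infty$ whose unique fixed point $V_\kappa$ is the $\kappa$-discounted optimal value. Since the DTMDP is communicating (Theorem~\ref{thm:datatransformation}(1)), the vanishing-discount theory applies: the relative values $\norm{V_\kappa}$ stay bounded uniformly in $\kappa\uparrow 1$ and $(1-\kappa)V_\kappa\to\lambda\mathbf e$ for a constant $\lambda$. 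Running the scheme $V^t=F(\kappa_{t-1}V^{t-1})=T_{\kappa_{t-1}}(V^{t-1})$ with $\kappa_t\uparrow 1$, I would show, adapting \citeauthor{bather1973optimal} and \citeauthor{bertsekas1995dynamic}, that the span $\norm{V^t}$ remains bounded, the normalized iterates $W^t=V^t-L(V^t)\mathbf e$ have a convergent subsequence, and every limit point $W^\star$ solves the average-cost optimality equation $\lambda\mathbf e+W^\star=F(W^\star)$ of~\eqref{eq:optimalityequation}; this is precisely what the choice of $\{\kappa_t\}$ buys. Existence of a solution $(\lambda,W^\star)$ together with the verification result quoted after~\eqref{eq:optimalityequation} proves that the greedy policy attaining the minimum is average-optimal with constant gain $\lambda$, giving Part~1 for the DTMDP and, by Theorem~\ref{thm:datatransformation}(2), for the SMDP.

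For Part~2 I would use that the per-step increments satisfy $V^t-V^{t-1}\to\lambda\mathbf e$, so the termination test $\overline V-\underline V<\epsilon$ fires after finitely many iterations. The standard span bound then gives $L(V^t-V^{t-1})\le\lambda\le H(V^t-V^{t-1})$ and shows the greedy policy $\mathbf u$ extracted in the last line of Algorithm~\ref{alg:vi} satisfies $\tilde P^*\tilde{\mathbf c}(\mathbf u)\le\lambda+\epsilon\mathbf e$, i.e.\ it is $\epsilon$-optimal. The temporal discretization with step $\delta$ replaces $[\underline\tau,\overline\tau]$ by a finite grid and, by continuity of $\tilde c$ and $\tilde p_{ij}$ in $\tau$ (Assumption~\ref{asmp:bound}), contributes only an error that vanishes as $\delta\to 0$, so the exact optimum is recovered in the limit.

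The hard part is the limit-point identification in the second paragraph under our joint, continuous spatial--temporal action space. The arguments of \citeauthor{bather1973optimal} and \citeauthor{bertsekas1995dynamic} treat the transition probabilities as the sole controls or use randomized controls, whereas here each $u(i)=(\mathbf p_i,\tau_i)$ couples a simplex variable with a deterministic period and $\tilde c$ hides the attacker's inner $\arg\max$. I therefore expect the two delicate points to be: (i) establishing that the inner best response is attained and that $\tilde c$ is genuinely continuous on the compact $U(i)$, where the strong-Stackelberg tie-breaking and the continuity hypothesis behind Assumption~\ref{asmp:bound} are essential; and (ii) securing the boundedness and span estimates \emph{uniformly} over the continuum of actions, so that the vanishing-discount and limit-point steps survive the passage from finitely many deterministic policies to the continuous action set. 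The remaining bookkeeping is standard span-seminorm value iteration.
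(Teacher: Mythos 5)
Your high-level architecture matches the paper's: bound the span $\norm{V^t}$, extract a limit point, identify limit points as fixed points of $F$ solving the optimality equation~\eqref{eq:optimalityequation} (Part~1), then invoke Bather-style stopping bounds for $\epsilon$-optimality (Part~2). However, there is a genuine gap precisely at the step you defer as a ``delicate point'': you never prove that $\{\norm{V^t}\}$ is bounded under the joint spatial-temporal action space, and this is the one step that cannot be cited from \citeauthor{bather1973optimal} or \citeauthor{bertsekas1995dynamic} --- it is the entire content of the paper's Lemma~\ref{lmm:bounded}, the only part of the proof the paper writes out in detail. The obstruction you correctly sense (Bather takes the transition matrix itself as the control, so any stochastic matrix is feasible; Bertsekas allows randomized controls, so convex combinations of achievable kernels are automatically achievable) is resolved in the paper by a constructive idea your proposal lacks: take the policies $\mathbf{u}_{ij}$ witnessing communication, average their transformed kernels into $Q=\frac{1}{n^2}\sum_{i,j}\tilde{P}(\mathbf{u}_{ij})$ as in~\eqref{eq:Q}, and then show that $Q$ is itself realizable as $\tilde{P}(\mathbf{u})$ for an actual policy $\mathbf{u}=(P,\underline{\tau}\mathbf{e})$ by inverting the data-transformation equations~\eqref{eq:solve_q_1}--\eqref{eq:solve_q_2}: set $p_{ij}=\underline{\tau}q_{ij}/\gamma$ for $j\neq i$ and $p_{ii}=1-\frac{\underline{\tau}}{\gamma}(1-q_{ii})$, which are valid probabilities precisely because $0<\gamma<\underline{\tau}$ forces $q_{ij}\in[0,\gamma/\underline{\tau}]$ for $j\neq i$ and $q_{ii}\in[1-\gamma/\underline{\tau},1]$. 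With this single fixed kernel $Q$ under which all states communicate, the expected hitting times $T_{ij}$ are finite, and an induction (using monotonicity of $\{V^t\}$ and $\tilde{\mathbf{c}}\leq\beta\mathbf{e}$) yields $V^t(i)\leq\beta T_{ij}+V^t(j)$, hence $\norm{V^t}\leq\max_{i\neq j}\beta T_{ij}$. Without this realizability argument there is no fixed comparison chain on which to run the hitting-time bound, so both your limit-point step and your claim that the termination test eventually fires remain unsupported.

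A secondary issue: your vanishing-discount detour (uniform span bounds on the discounted fixed points $V_\kappa$ and $(1-\kappa)V_\kappa\to\lambda\mathbf{e}$) analyzes objects the algorithm never computes --- each iteration applies the operator $V\mapsto F(\kappa_{t-1}V)$ once rather than solving for its fixed point --- so even granting that theory you would still need the direct bound on the iterates $\{V^t\}$; moreover, the standard proofs of that uniform bound for communicating MDPs rest on the same accessibility-plus-realizability argument that is missing here. The remainder of your outline (attainment of the minima by compactness and continuity, the verification step after~\eqref{eq:optimalityequation}, transfer to the SMDP via Theorem~\ref{thm:datatransformation}, and the $\delta$-grid approximation of $\tau$) is consistent with what the paper does or cites.
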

\noindent{\it Proof Sketch:} The first part can be proved using the similar techniques in the proofs of Theorem 2.4 of~\cite{bather1973optimal} and Proposition 5.2 of~\cite{bertsekas1995dynamic}. The main idea is to show that (1) $\{\|V^t\|\}$ is bounded thus the vector sequence $\{V^t\}$ must have a limit point; (2) every limit point of $\{V^t\}$ is a fixed point of $F(\cdot)$, thus leading to an optimal solution. The second part follows from Theorem 6.1 and Corollary 6.2 of ~\cite{bather1973optimal}. The main idea is to show that (1) if $\|F(V^{t})-V^{t}\| \leq \epsilon$, then the corresponding policy is $\epsilon$-optimal; (2) $\lim_{t \to \infty} \|F(V^{t})-V^{t}\|$ = 0 (again using the boundedness of $\{\|V^t\|\}$) so that $\|F(V^{t})-V^{t}\| \leq \epsilon$ holds eventually. Note that this condition is exactly the stopping condition $\overline{V} - \underline{V} < \epsilon$ in Algorithm~\ref{alg:vi} (line 8).

\ignore{
\begin{theorem}[Theorem 2.4 of~\cite{bather1973optimal}]
If the sequence $\{V^{t}\}$ has a fixed point $\eta$ such that $\norm{ F(\eta) - \eta} = 0$, then there is a constant $\lambda$ such that, 
\begin{equation}
   F(\eta) = \eta + \lambda \boldsymbol{e} = \mathbf{\tilde{c}}(P, \tau)  + \tilde{P}(P, \tau)\eta
\end{equation}
for some $u = (P, \tau) \in U$. Here, $\mathbf{\tilde{c}}(P, \tau) =\mathbf{ \tilde{c}}(u)$ and $\tilde{P}(P, \tau) = \tilde{P}(u)$.  Let $u'$ denote the optimal policy, that is,  $u'  = \argmin_{u \in U}\{  \tilde{P}^{*}(u) \mathbf{\tilde{c}}(u)\}$,  then  $ \lambda\mathbf{e} = \tilde{P}^{*}(u') \mathbf{\tilde{c}}(u')$, where $\tilde{P}^{*} = \lim_{n\to\infty}  \{I + \tilde{P} + \tilde{P}^{2} + \dots + \tilde{P}^{n-1}\}/n$.
\label{thm:existence}
\end{theorem}
}

\ignore{
\begin{theorem}[Theorem 6.1 of ~\cite{bather1973optimal}]
Let $F(V) = \tilde{c}_{u} + \tilde{p}_{u}V$ for some $V \in \mathbb{R}^{n}$, $u \in U$. If the stopping condition  $\norm{F(V) -V}\leq \epsilon$  is satisfied, then the policy $u$ is $\epsilon$-optimal.  Let $\{u^{t} \}$ be a sequence of the policy such that $ V^{t} = \mathbf{\tilde{c}}^{t} + \kappa_{t-1}\tilde{P}^{t}W^{t-1} = F(\kappa_{t-1}W^{t-1})$ for $t= 1, 2, \dots$ where $\mathbf{\tilde{c}}^{t} =\mathbf{\tilde{c}}^{t} (u^{t}) $,  then $\lim_{n\to\infty}\tilde{P}^{*}\mathbf{\tilde{c}}^{t} = \lambda$. 
\label{thm:epsilonoptimal}
\end{theorem}
}

The proof of Theorem~\ref{thm:riv} relies on the key property that the vector sequence $\{\|V^{t}\|\}$ generated by Algorithm~\ref{alg:vi} is bounded. Due to the coupling of spatial and temporal decisions in our problem, the techniques in~\cite{bather1973optimal,bertsekas1995dynamic} cannot be directly applied to prove this fact. Below we provide a detailed proof on this result by adapting the techniques in~\cite{bather1973optimal,bertsekas1995dynamic}. We prove that $\{\norm{V^t}\}$ is bounded. 

\ignore{
\begin{lemma}[Lemma 2.1 of~\cite{bather1973optimal}]
For any vectors $\mathbf{x},\mathbf{y}\in \mathbb{R}^n$ and constants $a,b \in \mathbb{R}$, we have
\begin{enumerate}
    \item $H(F(\mathbf{x})-F(\mathbf{y}))\leq H(\mathbf{x}-\mathbf{y})$.
    \item $\norm{F(\mathbf{x})-F(\mathbf{y})}\leq \norm{\mathbf{x}-\mathbf{y}}$.
    \item $\norm{F(a\mathbf{x})-F(b\mathbf{y})}\leq|a|\norm{\mathbf{x}-\mathbf{y}}+|a-b|\norm{\mathbf{y}}$.
\end{enumerate}
\end{lemma}
}

\begin{lemma}
Let Assumptions~\ref{asmp:transition} and~\ref{asmp:bound} hold, and $\{\kappa_t\}$ be a nondecreasing sequence with $\kappa_t\in [0,1]$ for each $t$. Consider a sequence $\{V^t\}$ where 
\begin{equation*}
    V^{t+1}=F(\kappa_tV^t)=\min_{P,\mathbf{\tau}}{[\tilde{\mathbf{c}}(P,\mathbf{\tau})+\kappa_t\tilde{P}(P,\mathbf{\tau})V^t]},
\end{equation*}
then $\{\norm{V^t}\}$ is bounded.
\label{lmm:bounded}
\end{lemma}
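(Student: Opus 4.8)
The plan is to bypass the span-seminorm nonexpansiveness machinery of~\cite{bather1973optimal,bertsekas1995dynamic}---which, because $\kappa_t\uparrow 1$, only yields an estimate of the form $\norm{V^{t+1}}\le \kappa_t\norm{V^t}+O(1)$ and hence a useless $O(t)$ growth bound---and instead to control $H(V^t)$ and $L(V^t)$ directly, so as to obtain a genuine contraction $\norm{V^{t+1}}\le \bar c+\beta\norm{V^t}$ with a modulus $\beta<1$ that does \emph{not} degrade as $\kappa_t\to1$. Two structural consequences of the data transformation drive the argument. First, from~\eqref{eq:tildepij}, $\tilde p_{ii}(u(i))=1-\tfrac{\gamma}{\tau_i}(1-p_{ii})\ge 1-\tfrac{\gamma}{\underline{\tau}}=:\beta$, so every transformed row retains mass at least $\beta\in(0,1)$ on its own state, uniformly over the coupled continuous action space; the complementary ``escape'' mass equals $\tfrac{\gamma}{\underline{\tau}}=1-\beta$ when $\tau_i=\underline{\tau}$. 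Second, Assumption~\ref{asmp:transition} lets the defender put all this escaping mass onto any single state in one step (set $p_{ij^\star}=1$). By Assumption~\ref{asmp:bound}, $\bar c:=\sup_i\sup_{u(i)\in U(i)}\tilde c(i,u(i))<\infty$, and $\tilde c\ge 0$ since all loss and migration costs are nonnegative.

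With these in hand I would establish two one-step estimates. For the upper bound, fix the maximizer $i_+$ of $V^{t+1}$ and evaluate the feasible action ``$\tau_{i_+}=\underline{\tau}$, $p_{i_+ j^\star}=1$'', where $j^\star\in\argmin_j V^t(j)$; since $V^{t+1}(i_+)$ is a minimum over actions, it is bounded by the value of this particular one, giving $H(V^{t+1})\le \bar c+\kappa_t\bigl[\beta\,V^t(i_+)+(1-\beta)L(V^t)\bigr]\le \bar c+\kappa_t\bigl[\beta H(V^t)+(1-\beta)L(V^t)\bigr]$ (the degenerate case $i_+=j^\star$ is even more favorable and yields $H(V^{t+1})\le\bar c+\kappa_t L(V^t)$ directly). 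For the lower bound, dropping the nonnegative $\tilde c$ and using that each row of $\tilde P(\mathbf u)$ is a probability vector gives $\sum_j\tilde p_{ij}V^t(j)\ge L(V^t)$, whence $V^{t+1}(i)\ge \kappa_t L(V^t)$ for every $i$ and so $L(V^{t+1})\ge\kappa_t L(V^t)$. Subtracting, the $\kappa_t(1-\beta)L(V^t)$ and $\kappa_tL(V^t)$ terms telescope and the two extreme values collapse into the span:
\[
\norm{V^{t+1}}=H(V^{t+1})-L(V^{t+1})\le \bar c+\kappa_t\beta\,\bigl(H(V^t)-L(V^t)\bigr)=\bar c+\kappa_t\beta\,\norm{V^t}.
\]

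Since $\kappa_t\le 1$ and $0<\beta<1$, this reads $\norm{V^{t+1}}\le \bar c+\beta\norm{V^t}$, and a trivial induction then yields $\norm{V^t}\le \max\{\norm{V^0},\,\bar c/(1-\beta)\}$ for all $t$, proving boundedness. I expect the crux to be exactly the contracting upper bound on $H(V^{t+1})$: the coupling of $\mathbf p_i$ and $\tau_i$ is what prevents directly invoking~\cite{bather1973optimal,bertsekas1995dynamic}, but here it is an asset rather than an obstacle---choosing $\tau_i=\underline{\tau}$ maximizes the escape fraction $\gamma/\tau_i$, and Assumption~\ref{asmp:transition} lets that fraction be steered onto the current minimizing state, which together force the modulus below $1$ independently of $\kappa_t$. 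The remaining work is routine bookkeeping: checking that $\tilde P(\mathbf u)$ is genuinely row-stochastic with the stated uniform diagonal lower bound, confirming $\tilde c\ge 0$ and $\bar c<\infty$ from the assumptions, and handling the degenerate maximizer/minimizer coincidence---none of which requires the communicating-diameter or accumulation-point arguments one would otherwise be tempted to use.
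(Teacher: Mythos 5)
Your proof is correct, and it takes a genuinely different route from the paper's. The paper's argument normalizes $V^0=0\mathbf{e}$, uses the nondecreasingness of $\{\kappa_t\}$ to show $\{V^t\}$ is monotone, then invokes the communicating property implied by Assumption~\ref{asmp:transition}: it mixes the policies $\mathbf{u}_{ij}$ into a single matrix $Q=\frac{1}{n^2}\sum_{i,j}\tilde{P}(u_{ij})$, verifies that $Q$ is itself achievable by some policy by inverting the data-transformation equations~\eqref{eq:solve_q_1}--\eqref{eq:solve_q_2}, and bounds $V^t(i)-V^t(j)\leq \beta T_{ij}$ by induction, where $T_{ij}$ are expected hitting times under $Q$ and $\beta$ there denotes the cost bound (beware the notational clash: your $\beta$ is the diagonal mass $1-\gamma/\underline{\tau}$, the paper's $\beta$ is your $\bar c$). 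You instead exploit a structural feature of the Schweitzer transformation that the paper never uses, namely that every transformed row is ``lazy'' with $\tilde p_{ii}\geq 1-\gamma/\underline{\tau}$, and combine it with Assumption~\ref{asmp:transition} (choosing $\tau_{i_+}=\underline{\tau}$, $p_{i_+j^\star}=1$, which is feasible and yields valid transformed probabilities precisely because $0<\gamma<\underline{\tau}$) to aim the entire escape mass $\gamma/\underline{\tau}$ at the current minimizer; together with the trivial lower bound $L(V^{t+1})\geq\kappa_tL(V^t)$ this gives the genuine contraction $\norm{V^{t+1}}\leq\bar c+\kappa_t\beta\norm{V^t}$. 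Each step checks out: the test-action upper bound on $H(V^{t+1})$, the convex-combination lower bound, the telescoping of the $L(V^t)$ terms, and the induction to $\norm{V^t}\leq\max\{\norm{V^0},\bar c\,\underline{\tau}/\gamma\}$. Your approach buys an explicit quantitative bound, needs neither the normalization of $V^0$ nor the monotonicity of $\{V^t\}$ (so it works for any $\kappa_t\in[0,1]$, not just nondecreasing sequences), and avoids hitting-time finiteness arguments entirely; it also does not require the minimum in $F$ to be attained, since the upper bound uses one feasible action and the lower bound holds uniformly over actions. What the paper's argument buys in exchange is that it leans only on the communicating property---a strictly weaker consequence of Assumption~\ref{asmp:transition} than the full ``arbitrary $P$'' power you use---so it would survive in settings where the defender's transition matrices are constrained but the chain remains communicating, and it stays within the Bather--Bertsekas framework reused in the rest of the proof of Theorem~\ref{thm:riv}. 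One small bookkeeping point: your lower-bound step drops $\tilde{\mathbf{c}}\geq 0$; this nonnegativity is implicit in the model (and the paper's proof assumes $0\mathbf{e}\leq\tilde{\mathbf{c}}\leq\beta\mathbf{e}$ as well), but if one only grants Assumption~\ref{asmp:bound} literally (boundedness, possibly below by a negative constant $\underline{c}$), your argument survives with the contraction becoming $\norm{V^{t+1}}\leq(\bar c-\underline{c})+\kappa_t\beta\norm{V^t}$.
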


\begin{proof}
Without loss of generality, we assume $V^0=0\mathbf{e}$.
Since $\tilde{c}(P,\mathbf{\tau})$ is bounded according to Assumption~\ref{asmp:bound}, there exists a constant $\beta$ such that $0\mathbf{e} \leq \tilde{\mathbf{c}}(P,\mathbf{\tau})\leq \beta \mathbf{e}$.
Then using the fact that $\kappa_t$ is nondecreasing, we can shown that $\{V^t\}$ is nondecreasing by induction. 
For a communicating system, for each pair of states $i$ and $j$, there exists a stationary policy 
$\mathbf{u}_{ij}$ such that $j$ is accessible from $i$. Now we combine the transition probability matrices from these policies to form a new transition probability matrix
\begin{equation}
    Q=\frac{1}{n^2}\sum^n_{i=1}\sum^n_{j=1}\tilde{P}(u_{ij}) \label{eq:Q}
\end{equation}
The key observation is that the matrix $Q$ is also a valid transition probability matrix, that is, there exists a policy $\mathbf{u}$ such that $Q = \tilde{P}(\mathbf{u})$. This is because the defender can choose arbitrary migration probabilities and the defending period lengths as her strategy. Formally, for an arbitrary transition probability matrix $Q = [q_{ij}]$ given by~\eqref{eq:Q}, we may solve the data-transformation equations to identify the corresponding policy $\mathbf{u} = (P,\tau)$. 
\begin{align}
    q_{ij} &= \frac{\gamma p_{ij}}{\tau_i}, \forall j \neq i \label{eq:solve_q_1} \\
    q_{ii} & = \frac{\gamma(p_{ii}-1)}{\tau_i}+1 \label{eq:solve_q_2}
\end{align}

\noindent To see this is always possible, we first observe that for any probability matrix $Q = [q_{ij}]$ given by~\eqref{eq:Q}, we must have $q_{ij} \in [0,\gamma/\underline{\tau}]$ for $j \neq i$ and $q_{ii} \in [1-\gamma/\underline{\tau},1]$ using the definition of the data transformation. Define a policy $\mathbf{u}=(P, \boldsymbol\tau)$ where $ \boldsymbol\tau = \underline{\tau}\mathbf{e}$ and $P = [p_{ij}]$ is given by
\begin{align*}
    p_{ij} &= \frac{\underline{\tau} q_{ij}}{\gamma}, \forall j \neq i\\
    p_{ii} & = 1- \frac{\underline{\tau}}{\gamma}(1-q_{ii}) 
\end{align*}
\ignore{
\begin{align}
    q_{ij\neq i} &= \frac{(\underline{\tau}-\varepsilon)p_{ij}}{\tau_i}\\
    q_{ii} & = \frac{(\underline{\tau}-\varepsilon)(p_{ii}-1)}{\tau_i}+1 
\end{align}
Given $\mathbf{u}=(P,\underline{\tau}\mathbf{e})$, we have 
\begin{align}
    q_{ij\neq i} &= p_{ij\neq i} (1 - \frac{\varepsilon}{\underline{\tau}})\\
    q_{ii} & = p_{ii} (1-\frac{\epsilon}{\underline{\tau}})+\frac{\varepsilon}{\underline{\tau}}
\end{align}
From the second equation, we know that $q_{ii} \geq \frac{\varepsilon}{\underline{\tau}}$.  Since for each row $i$ the constraint $\sum_jq_{ij}=1$ holds, we have $q_{ij\neq j}\leq 1-\frac{\varepsilon}{\underline{\tau}}$ . Given $p_{ij}\in[0,1]$ and $\sum_jp_{ij}=1$, it follows that any $q_{ij}$ can be constructed by a $p_{ij}$ and any row vector $\mathbf{q}_i$ can be constructed by a $\mathbf{p}_i$.}
Using the above mentioned properties of $q_{ij}$ and the definition of $\gamma$, it is easy to verify that $P$ is a probability transition matrix and $\mathbf{u}=(P, \boldsymbol\tau)$ defined above satisfies equations~\eqref{eq:solve_q_1} and~\eqref{eq:solve_q_2}.

From the definition of $Q$, every state is accessible from every other state. Define $T_{ij}$ as the expected number of transitions to reach $j$ from $i$, we have
\begin{align*}
    T_{ij}= 1 + \sum_{k\neq j}q_{ik}T_{kj}, \quad i,j\in S,i\neq j
\end{align*}
We can then show by induction that
\begin{equation*}
V^t(i) \leq \beta T_{ij}+V^t(j), \quad i,j\in S,i\neq j, t=0,1,\dots  
\end{equation*}
\ignore{Consider the base case $k=0$, $V^0=0\mathbf{e}$, the inequality holds. We use $k$ in the induction step,
\begin{equation}
    V_i^{k+1} \leq \tilde{\mathbf{c}}(\tilde{P},\mathbf{\tau}) + \kappa_kQV^k \leq \beta\mathbf{e}+QV_j^k
\end{equation}
For all $i,j\in S$, 
\begin{align}
    V_i^{k+1} &\leq \beta+\sum^n_{l=1} q_{il}V^k_l \\
    &= \beta +\sum_{l\neq j}q_{jl}V^k_l+q_{ij}V^k_j \\
    &= \beta(1+\sum_{l\neq j}q_{il}T_{lj})+\sum_{l\neq j}q_{jl}V^k_l+q_{ij}V^k_j \\
    &=\beta T_{ij}+V^k_j
\end{align}
Since we know $\{V^k\}$ is nondecreasing, hence
\begin{equation}
V_i^{k+1} \leq \beta T_{ij}+V_j^{k+1}, \quad i,j\in S,i\neq j, k=0,1,\dots  
\end{equation}
which completes the induction.
}
It follows that
\begin{equation*}
    \norm{V^t}\leq \max\{\beta T_{ij}| i,j\in S,i\neq j\}
\end{equation*}
Thus, $ \norm{V^t}$ is bounded. This concludes the proof.
\end{proof}

\subsection{Bilevel Optimization Problem}
\label{sec:bilevel}
To compute the optimal defense strategy with Algorithm~\ref{alg:vi}, we need to solve the following optimization problem for a given scalar $\tau$ and a vector $V^{t-1}$ (line 5, 9 in Algorithm~\ref{alg:vi}):
\begin{equation*}
    V^t(i)=\min_{\mathbf{p}_i}{[\tilde{c}(i,\mathbf{p}_i, \tau)+ \kappa_{t-1}\sum_{j\in S} \tilde{p}_{ij}(\mathbf{p}_i, \tau)V^{t-1}(j)]}
\end{equation*}
Substitute $\tilde{c}(i,\mathbf{p}_i,\tau)$ and $\tilde{p}_{ij}(\mathbf{p}_i,\tau)$ by their definitions in Equations~\eqref{eq:tildeciui} and~\eqref{eq:tildepij}, 
\ignore{
\begin{align*}
    V^t(i) &= \min_{\mathbf{p}_i}\Bigg[\frac{\sum_l\pi_l\max_{a\in A_l}\sum_{j}p_{ij} E[(\tau_i-\xi_{a,j})^+]R^l_{a,j} }{\tau_i}  \\ &+ \frac{\sum_{j} p_{ij}[m_{ij}+\gamma\kappa_{t-1} W^{t-1}(j)]}{\tau_i}+\kappa_{t-1}(1-\frac{\gamma}{\tau_i})W^{t-1}(i)\Bigg] 
\end{align*}
}
and denote $w^l_{j,a} \triangleq E[(\tau_i-\xi^l_{a,j})^+]$ and $\theta_j \triangleq m_{ij}+\gamma \kappa_{t-1} V^{t-1}(j)$ to simplify the notation. 
The defender's optimization problem then simplifies to (with the constant terms in the objective function omitted): 
\begin{align*}
    \min_{\mathbf{p}_i}\sum_l\pi_l\left(\sum_j p_{ij}w^l_{j,a^l}C^l_{a^l,j}\right)+\sum_jp_{ij}\theta_j\\
    s.t.\quad p_{ij} \in [0,1], \forall j \in S;  \sum_{j\in S} p_{ij}=1; \\
    \quad a^l = \arg\max_{a\in A_l}\left(\sum_j p_{ij}w^l_{j,a}R^l_{a,j}\right)\quad \forall l\in L.
\end{align*}
Using the similar technique for solving Bayesian Stackelberg games~\cite{paruchuri2008playing}, this bilevel optimization problem 
can be modeled as a Mixed Integer Quadratic Program (MIQP):
\begin{align*}
\min_{\mathbf{p}_i,\mathbf{n},\mathbf{v}}\sum_{j\in S}\sum_{l\in L}\sum_ {a\in A_l}\pi_lw^l_{j,a}C^l_{a,j}p_{ij}n^l_a+\sum_jp_{ij}\theta_j\\
s.t.\quad
\sum_{j\in S}p_{ij}=1\\
\sum_{a\in A_l} n^l_a=1, \quad \forall l\in L\\
0\leq v^l-\sum_{j}p_{ij}w^l_{j,a}R^l_{a,j}\leq(1-n^l_a)B, \quad \forall a \in A_l,l\in L \numberthis \label{bigM}\\
p_{ij}\in[0,1], \ \ n^l_a=\{0,1\}, \ \ v^l\in \mathbb{R}, \ \ \forall j \in S, a \in A_l, l \in L
\end{align*}
where the binary variable $n^l_a = 1$ if and only if $a \in A_l$ is the best action for the type $l$ attacker. This is ensured by constraint~\eqref{bigM} where $v^l$ is an upper bound on the attacker's reward and $B$ is a large positive number.

\subsection{Relative Value Iteration}

\begin{algorithm}[t]
\caption{Relative Value Iteration algorithm for the MTD game}
\begin{algorithmic}[1]
\Statex \textbf{Input:} $S, n, \epsilon>0, \underline{\tau}, \overline{\tau}, M, C, R, \pi, L,  \{A_l\}, \delta>0.$ 
\Statex \textbf{Output:} $P^{*}, \tau^{*}$
\State $V^{0}\in \mathbb{R}^n$;
\State $W^0=V^0-V^0(s)\mathbf{e}$;
\State $\kappa_0=\frac{1}{2},$ $\kappa_t=\frac{t}{t+1}$ for $t=1,2,\dots.$
\Repeat
\State  $t=t+1$;
\State $V^t=\textbf{PImp}(S, W^{t-1},\underline{\tau}, \overline{\tau}, M, C, R, \pi, L,  \{A_l\}, \delta, \kappa_{t-1})$
\State $W^t=V^t-V^t(s)\mathbf{e}$
\State $\overline{V}=\max_{i\in S}{|V^{t}(i)-V^{t-1}(i)|};$
\State $\underline{V}=\min_{i\in S}{|V^{t}(i)-V^{t-1}(i)|};$
\Until $\overline{V}-\underline{V} < \epsilon$
\State $P^{*}, \tau^{*}=\arg\textbf{PImp}(S, W^{t-1},\underline{\tau}, \overline{\tau}, M, C, R, \pi, L,  \{A_l\}, \delta, \kappa_{t-1})$
\end{algorithmic}
\label{alg:rvi}
\end{algorithm}

The VI algorithm can be slow in practice due to the large number of iterations needed to converge and the complexity of solving multiple MIQP problems in each iteration.  
To obtain a more efficient solution, we introduce a Relative Value Iteration (RVI) algorithm (see Algorithm~\ref{alg:rvi}). The RVI algorithm maintains both a state-value vector $V^t$ similar to the VI algorithm as well as a relative state-value vector $W^t$ defined as $W^t=V^t-V^t(s)\mathbf{e}$ where $s$ is a fixed state (line 2). The RVI algorithm again starts with an arbitrary vector $V^0$. In each iteration $t$, $V^t$ and $W^t$ are updated as $V^t= F(W^{t-1})$ (line 6) where $F$ is defined in Equation~\ref{eq:F} and $W^{t+1}=V^t-V^t(s)\mathbf{e}$ (lines 7). Under Assumptions \ref{asmp:transition} and \ref{asmp:bound}, we can show that the sequence $\{W^t\}$ converges to a vector $W^*$ satisfying the optimality equation $F(W^*)(s)\mathbf{e}+W^*=F(W^*)$ similar to Equation~\eqref{eq:optimalityequation}, where $F(W^*)(s)$ gives the optimal average cost.
Further, Theorem~\ref{thm:riv} still holds by substituting $\norm{V^{t+1}-V^t}$ with $\norm{W^{t+1}-W^t}$. 

Theoretically, the RVI algorithm does not help improve the efficiency of the VI algorithm since $W^t$ and $V^t$ only differ by a multiple of the vector of all ones and the bilevel optimization problems involved in both algorithms are mathematically equivalent. In terms of converge rates, $\norm{W^{t+1}-W^{t}}$ has the same bound as $\norm{V^{t+1}-V^{t}}$, as determined by the sequence $\{\kappa_t\}$ (line 2 in Algorithm \ref{alg:vi} and line 3 in Algorithm \ref{alg:rvi}) and the upper bound of $\norm{V^t} = \norm{W^t}$. In practice, however, the RVI algorithms often requires smaller number of iterations for the same $\epsilon$. In addition, if we introduce the assumption that $p_{is}\geq\rho>0$ for all $i\in S$ (where $\rho$ can be arbitrarily small) to further restrict the Markov chain structure and set $\kappa_t=1$ for all $t$, the rate of convergence can be significantly improved~\cite{bertsekas1995dynamic} for both the VI and the RVI algorithms.

\section{Empirical Evaluation}
We conducted numerical simulations using the real data from the National Vulnerability Database (NVD)~\cite{o2009national} to demonstrate the advantages of using MSG for spatial-temporal MTD. In particular, we derived the key attack/defense parameters from the Common Vulnerabilities Exposure (CVE) scores~\cite{mell2006common} in NVD, which has been widely used to describe the weakness of a system with respect to certain risk levels. We used data samples in NVD with CVE scores ranging from January 2013 to August 2016. As in~\cite{sengupta2017game}, the Base Scores (BS) and Impact Scores (IS) were used to represent the attacker's reward (per unit time) and the defender's cost (per unit time) respectively. Further, we used the Exploitability Scores (ES) to estimate the distribution of attack time. 

We conducted two groups of experiments\footnote{Code is available at \href{https://github.com/HengerLi/SPT-MTD}{https://github.com/HengerLi/SPT-MTD}.}: the spatial decision setting and the joint spatial-temporal decision setting. We compared the MSG method with two benchmarks: the Bayesian Stackelberg Game (BSG) model~\cite{sengupta2017game} and the Uniform Random Strategy (URS) ~\cite{sengupta2017game}.
We used the Gurobi solver (academic version 8.1.1) for the MIQP problems in BSG and MSG. All the experiments were run on the same 24-core 3.0GHz Linux machine with 128GB RAM.

\subsection{Spatial Decision}

\subsubsection{Benchmarks and Settings:}
In the spatial decision setting, the defender periodically moves in unit time length and the attacker instantaneously compromises the system when he chooses the correct configuration. We compared the MSG model with the original BSG and URS models in~\cite{sengupta2017game}.
In BSG, the defender determines the next configuration according to a fixed transition probability vector $[p_j]_n$ that is independent of the current configuration. In URS, the defender selects the next configuration uniform randomly.

For fair comparisons, we followed the same data generation method as used in the work by~\citeauthor{sengupta2017game}~\cite{sengupta2017game}. The system has four configurations $S=\{(PHP,MySQL)$, $(Python, MySQL)$, $(PHP, postgreSQL)$, $(Python, postgreSQL)\}$ with the switching cost shown in Figure \ref{cost_f}. The default migration cost matrix $M$ is $[[2,4,8,12]$, $[4,2,11,7]$, $[8,11,2,12]$, $[12,7,12,2]]$ (we added an updating cost of 2 to all the switching cost in~\cite{sengupta2017game}.) We varied the updating cost (see the numbers in the parentheses of the 
blue boxes in Figure~\ref{cost_f}) and evaluated the impact of different updating cost on the performance of defender's strategies.

In this experiment, we considered three attacker types: the Script Kiddie that could attack $Python$ and $PHP$, the Database Hacker that is able to attack $MySQL$ and $postgreSQL$ and the Mainstream Hacker that could attack all the techniques. The defender possesses a prior belief of $(0.15, 0.35, 0.5)$ on the three attacker types. The size each of their attack space is $34, 269$ and $48$, respectively. For BSG, the defender's optimization problem was directly solved with MIQP as in~\cite{sengupta2017game}. For MSG, the bi-level optimization problem was solved with MIQP for each configuration in every iteration of Algorithm~\ref{alg:vi} (with the convergence parameter $\epsilon = 0.1$). 
Since the simulated migration cost could not be directly compared with the attacking cost from NVD, we introduced a parameter $\alpha$ to adjust the ratio between the attacking cost and the migration cost. That is, instead of the $m_{ij}$ shown in Figure~\ref{cost_f}, we used $\alpha m_{ij}$ as the migration cost from state $i$ to state $j$. As $\alpha$ increases, the migration cost has a larger impact on the defender's decisions.

\begin{figure}[t]
\centering
 \includegraphics[width=.95\linewidth]{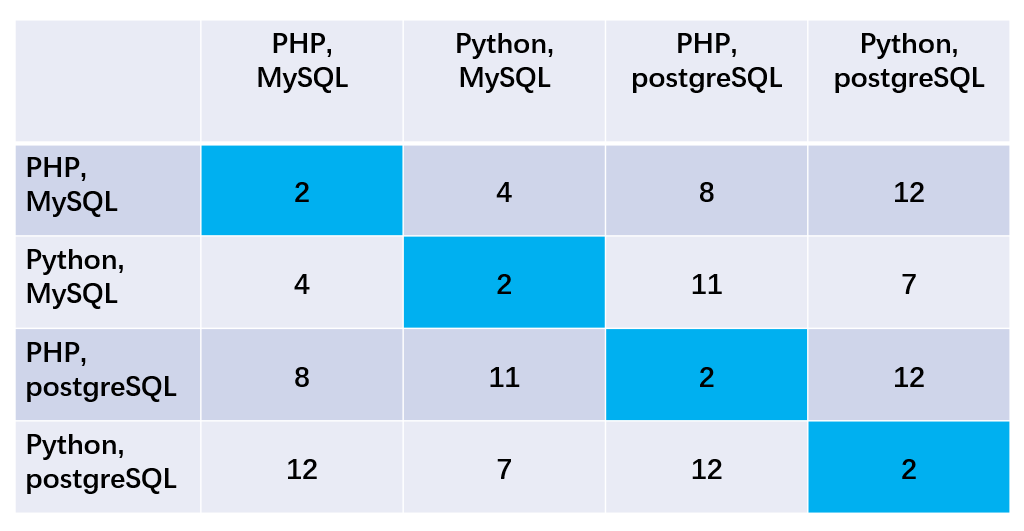}
\caption{The migration cost in the MTD system. Each row represents a source configuration and each column represents a destination configuration. The updating costs are shown in the blue boxes.}
\label{cost_f}
\end{figure}

\begin{figure}[t]
\centering
\includegraphics[width=.95\linewidth]{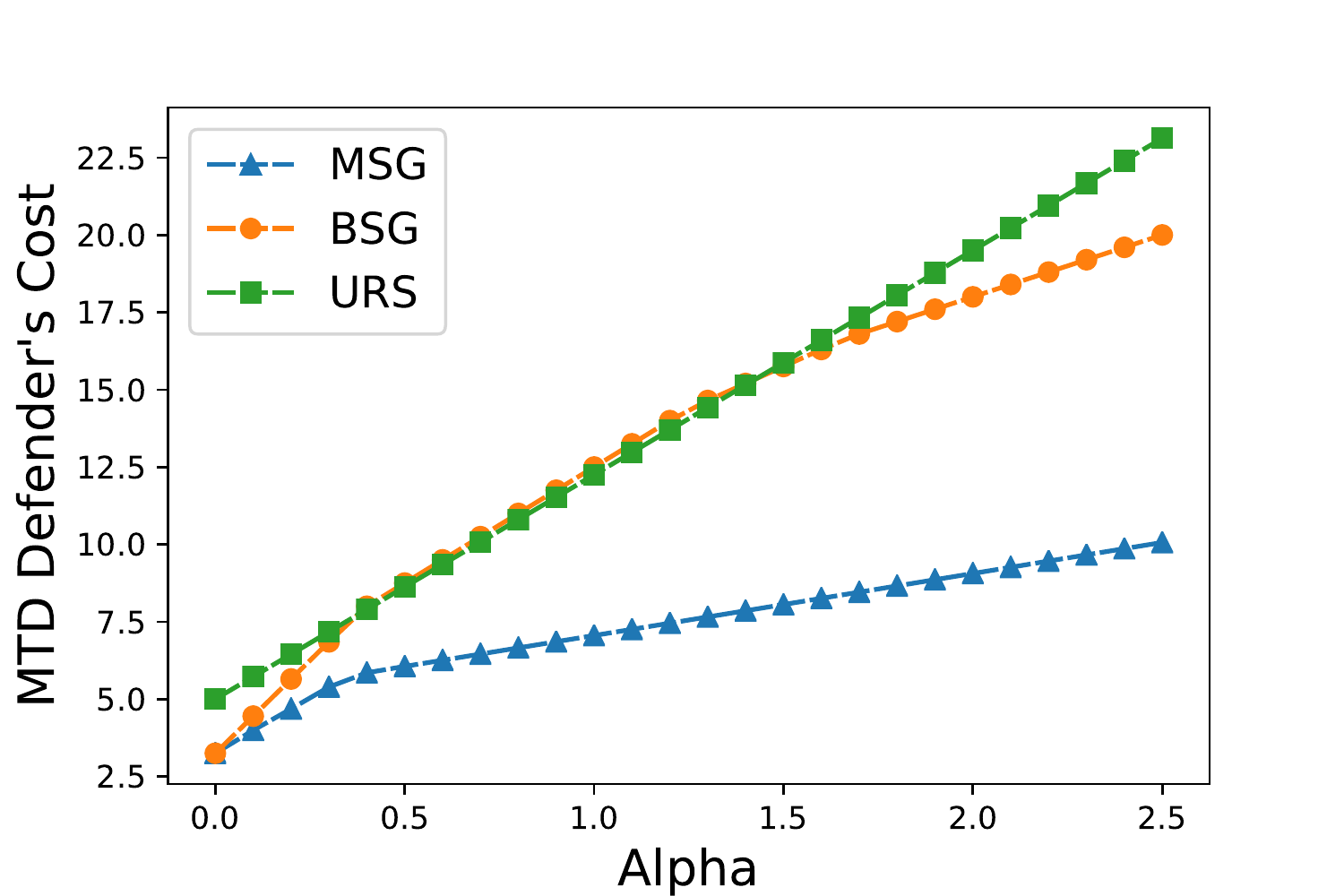}
\caption{A comparison of the defender's cost in the three policies with spatial decisions only - MSG ($\epsilon=0.1$), BSG and URS with unit defending period ($\tau_{i} = 1$ for all $i \in S$) and zero attacking time 
as the parameter $\alpha$ increases.}
\label{fig:MSGvsBSG}
\end{figure}

\begin{figure}[t]
\centering
 \includegraphics[width=.95\linewidth]{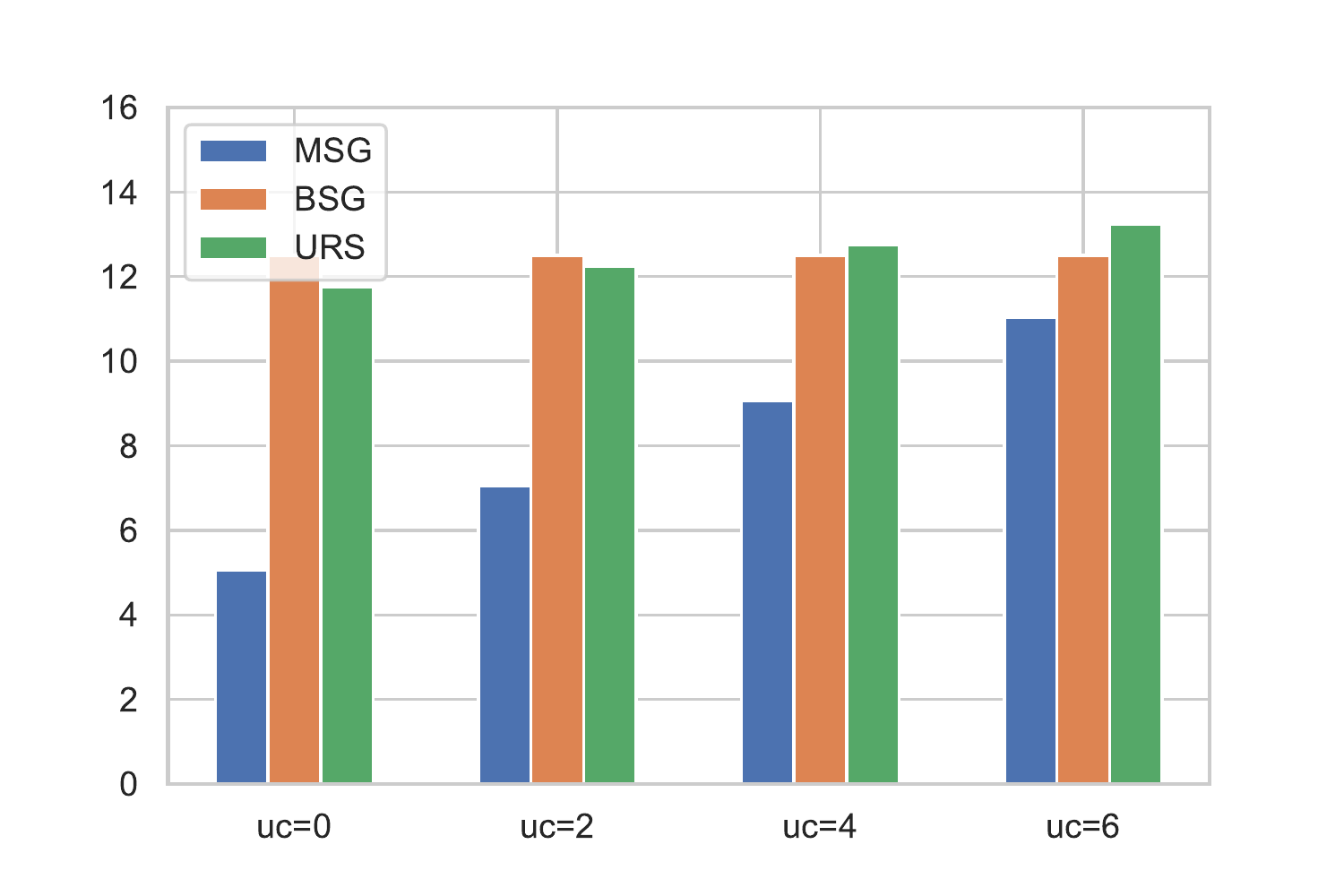}
\caption{The defender's cost in the three policies with spatial decisions only - MSG ($\epsilon=0.1$), BSG and URS under four different updating cost $0,2,4,6$ with $\alpha=1$. The four settings change the default updating cost 2 (in blue boxes of Figure \ref{cost_f}) to  $0,2,4,6$ while keeping the other values unchanged.}
\label{cost}
\end{figure}

\subsubsection{Results:} 
We varied the value of $\alpha$ 
from $0$ to $2.5$ with an increment of $0.1$ and compared the defender's cost for the three policies: the MSG, the BSG, and the URS (See Figure~\ref{fig:MSGvsBSG}). All three models were restricted to unit length defending period and zero attacking time. 
That is $\tau_{i} = 1$ for all $i \in S$, and $\xi^l_{a^l,j} = 0$ for all $a^{l} \in A^{l}, j \in S$. 

Figure~\ref{fig:MSGvsBSG} shows that the defender's cost increases for all the three policies as the migration cost grows. However, the magnitude of increase differs in the three policies. In URS, the cost increases linearly due to the uniform random strategy $(0.25,0.25,0.25,0.25)$ used. In both MSG and  BSG, the defender's cost grows sub-linearly. However, the defender incurs substantially less cost in MSG. The reason is that although both MSG and BSG enable the defender to choose the respective optimal strategies, MSG allows the defender to vary her strategy according to different source configurations while in BSG the defender must choose the same strategy for all the source configurations. When $\alpha$ is small ($\alpha\in [0,0.3]$), BSG uses $(0, 0, 0.5, 0.5)$ as the defender's strategy, and MSG chooses almost the same strategy for each configuration (MSG could do better if we impose temporal decisions as shown below). 
When $\alpha \in [0.4, 1.5]$, BSG uses strategies $(0.25,0.25,0.25,0.25)$ and $(0.25, 0.5, 0, 0.25)$. 
In contrast, as $\alpha$ grows within $[0.4, 1.6]$, MSG chooses the most beneficial configuration $(PHP, postgreSQL)$ (in term of the attacking cost plus the updating cost) as the absorbing state. 
At the turning point $\alpha =0.6$ and $\alpha=1.2$,
the defender changes the spatial strategy at $(Python, MySQL)$ from $(0,0,0.5,0.5)$ to $(0,0,0,1)$ and the spatial strategy at $(PHP, MySQL)$ from $(0,0,0.5,0.5)$ to $(0.5,0,0,0.5)$ to trade uncertainty for less migration cost. These source-dependent adjustments can't be achieved by BSG.
the Markov chain structure with both the absorbing states (when the system moves in, it always stays there) and the transient states (once the system moves out, it never come back) cannot be achieved by BSG since the only way to stay in a configuration using BSG is to assign a probability of 1 to that configuration for all source configurations, which, however, removes any uncertainty to the attacker. This indicates that MSG can achieve a better trade-off between the migration cost and the loss from attacks, where the latter is determined by the uncertainty of the attacker.
When $\alpha$ is large, the overall migration cost is high. In BSG, the defender uses the strategy  $(0.5,0.5,0,0)$ (for $\alpha \in [1.6, 2.5]$) to move between the two configurations with relatively low migration cost (which is still higher than the updating cost). 
For $\alpha \in [1.7,2.5]$, MSG chooses the strategy (i.e. (0,1,0,0) instead of (0.5,0,0,0.5) at $(PHP,MySQL)$) to avoid the costly move from $(PHP,MySQL)$ to $(Python, postgreSQL)$ that has a migration cost of $12$, the defender chooses another path by firstly moving from $(PHP,MySQL)$ to $(Python, MySQL)$ and then to $(Python, postgreSQL)$. This again brings some uncertainty to the attacker.

We further compared the defender's cost under four different updating cost settings (see Figure~\ref{cost}). In URS, the defender's cost lineally increases as updating cost grows due to the uniform spatial strategy. In BSG, the defender's cost stay the same because of the inaccurate estimation of $p_ip_j$ using the piecewise linear McCormick envelopes \cite{sengupta2017game}. In contrast, MSG generates the accurate probabilities $p_{ij}$ corresponding to the configuration pair in each migration. We observe that as the updating cost increases, the gap of defender's cost between MSG and other two models decreases. The reason is that when the divergence between the moving cost and updating cost is flatted, the advantage of setting the most beneficial configuration as an absorbing state (i.e., avoiding unnecessary switching) become negligible. MSG has significant advantages over the other two methods when  the migration cost matrix (e.g. the updating cost is significantly smaller than moving cost) is unbalanced and the cost values vary from different source configurations.

\subsection{Joint Spatial-Temporal Decision}
\subsubsection{Benchmarks and Settings:} In the joint spatial-temporal decision setting, the defender needs to decide not only the next configuration to move to but also the length of each defending period $\tau$. In our experiments, $\tau$ is in the range of $[0.1, 2.6]$ with an increment parameter $\delta=0.1$. For MSG, the optimal $\tau_i$ for each configuration $i$ was obtained together with the spatial decisions using Algorithm~\ref{alg:vi}.
We extended the BSG and URS policies by incorporating the attacking times and the defending periods into the objectives of both the defender and the attacker (as we did in our MSG model), where a fixed defending period is used for all the configurations since both policies ignore the source configuration in each migration. To have a fair comparison with MSG, we searched for the optimal defending period for BSG and URS, respectively, by solving the following optimization problems: 

\begin{enumerate}
    \item Uniform Random Strategy (URS) with Temporal Decision:
    \begin{alignat*}{1}
        \min_{\tau}\frac{\sum_l\pi_l\Big(\sum_j \mathbb{E}[(\tau-\xi_{a,j})^+]C^l_{a^l,j}\Big)+\frac{\alpha}{n}\sum_{i,j}m_{ij}}{n\tau}
        \nonumber\\
        s.t.\quad a^l = \arg\max_{a\in A_l}\Big(\sum_j \frac{1}{n}\mathbb{E}[(\tau-\xi_{a,j})^+]R^l_{a,j}\Big)\quad \forall l\in L
    \end{alignat*}
    \item Bayesian Stackelberg Game (BSG) with Spatial-Temporal Decisions:
    \vspace{-1ex}
    \begin{alignat*}{2}
        \min_{\mathbf{p},\tau}\frac{\sum_l\pi_l\Big(\sum_j p_j\mathbb{E}[(\tau-\xi_{a,j})^+]C^l_{a^l,j}\Big) +\alpha \sum_{i,j}p_{i} p_{j} m_{ij}}{\tau} \nonumber\\
        s.t.\quad a^l = \arg\max_{a \in A_l}\Big(\sum_j p_j\mathbb{E}[(\tau-\xi_{a,j})^+]R^l_{a,j}\Big)\quad \forall l\in L
    \end{alignat*}
\end{enumerate}


We assigned a random attacking time for each attack that aims to compromise the system. The random attacking time $\xi_{a,j}$ was drawn from the exponential distribution $Exp({ES_a})$ (the mean attacking time is $1/ES_a$) when $a$ is targeting a vulnerability in state $j$ and $\xi_{a,j} = + \infty$ otherwise. 
Here, $ES_a$ refers to the {\em exploitability score} of the vulnerability targeted by attack method $a$. The $ES$ score of a vulnerability is a value between 0 and 10 and 
a higher $ES$ score means it is easier to exploit the vulnerability~\cite{mell2007complete}. 

We used the same migration cost setting as the spatial decision experiment with the default migration cost matrix $[[2,4,8,12]$, $[4,2,11,7]$, $[8,11,2,12]$, $[12,7,12,2]]$. For MSG, we set the convergence parameter $\epsilon$ as $0.1$. For each vulnerability, we generated 1000 samples from the corresponding exponential distribution and used their average as the attacking time. 

\begin{figure}[t]
\centering
 \includegraphics[width=.95\linewidth]{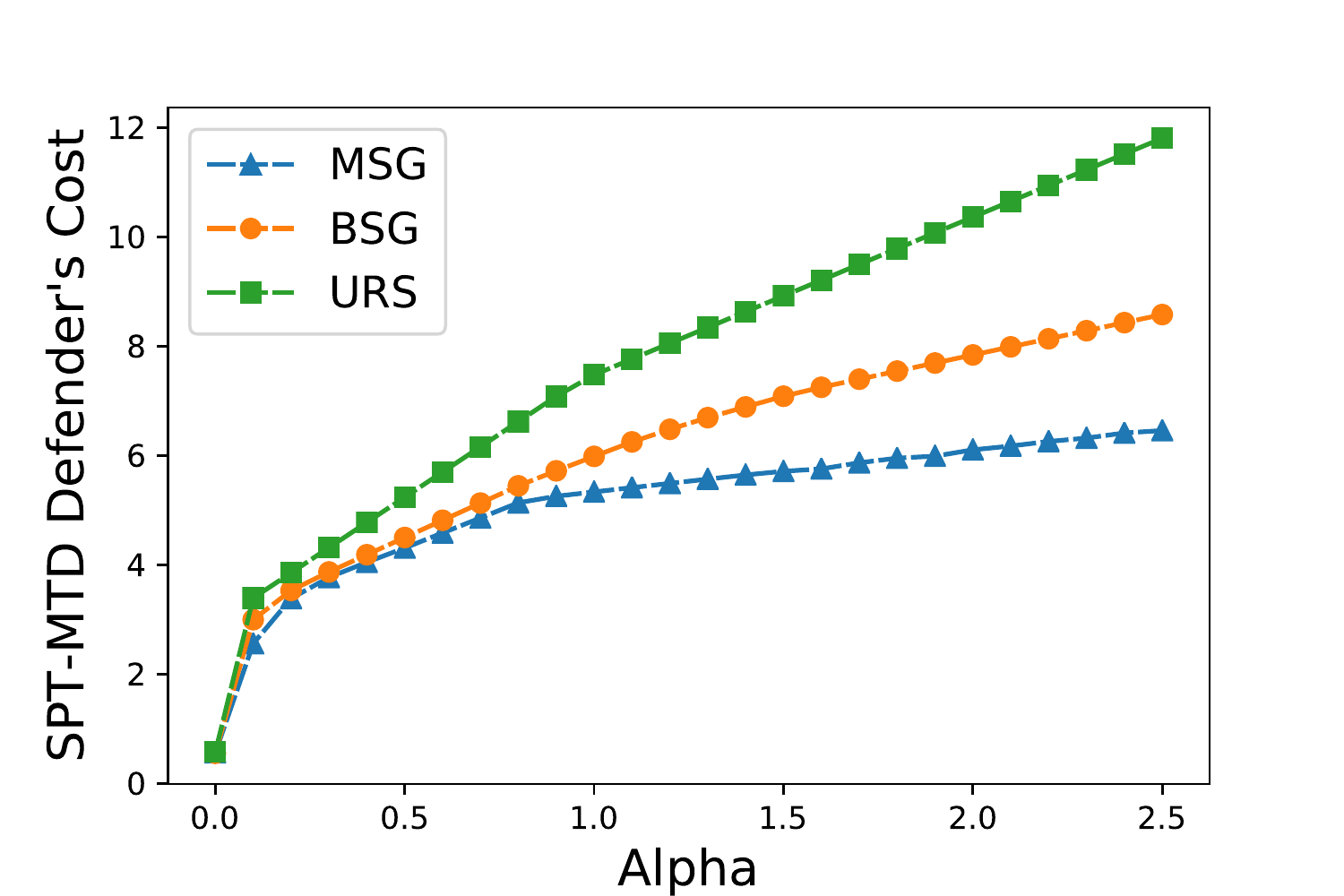}
\caption{A comparison of the defender's cost in the three spatial-temporal policies - MSG ($\epsilon=0.1$), BSG and URS as the parameter $\alpha$ increases. $\xi_{a,j}\sim Exp({ES_a})$.}
\label{MSG_tau}
\end{figure}

\begin{figure}[t]
\centering
 \includegraphics[width=.95\linewidth]{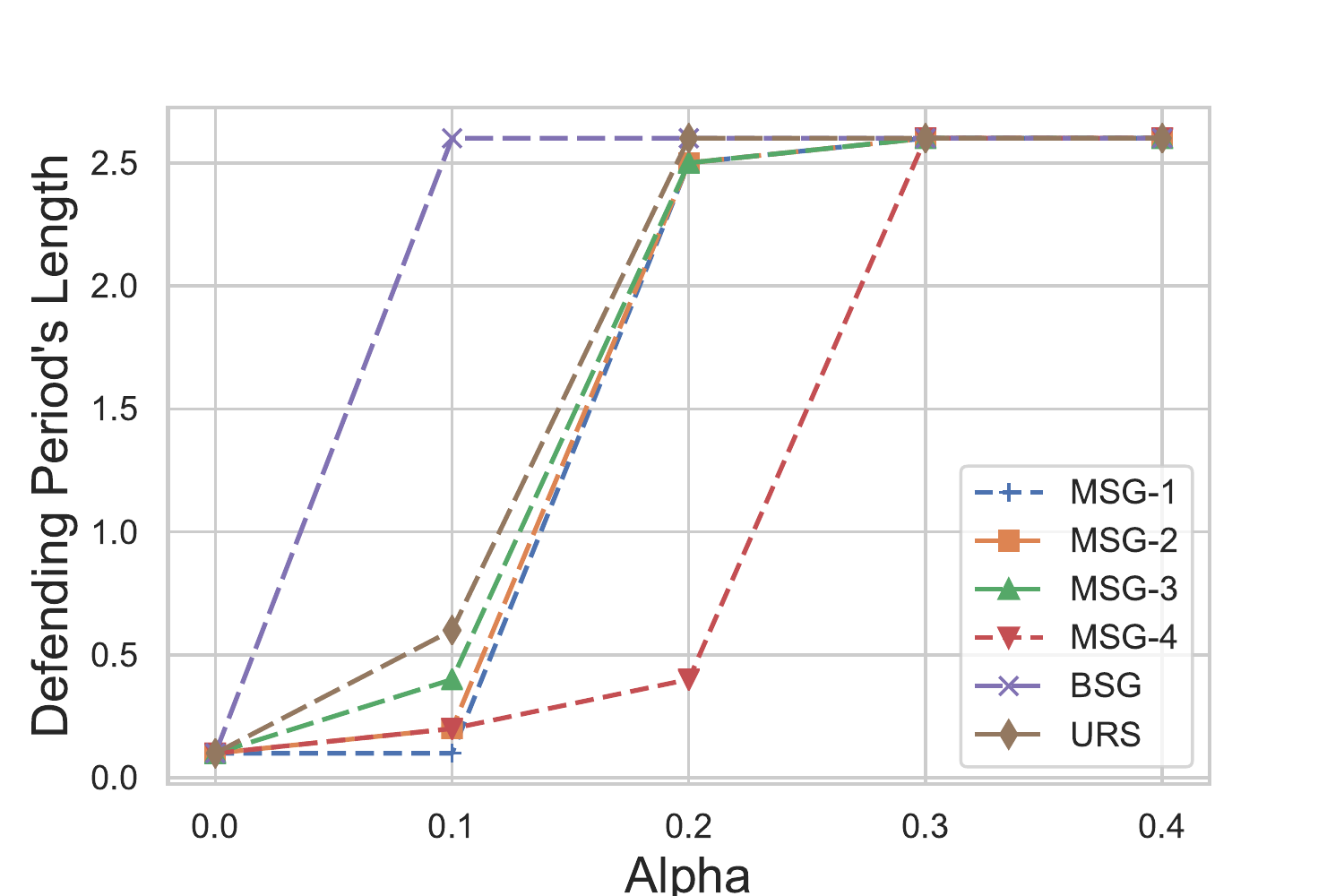}
\caption{A comparison of the defender's temporal decisions in the three policies when $\alpha \in [0, 0.4]$ with a step size of $0.1$ and $\tau \in [0, 2.6]$ with an increment $\delta=0.1$. For all the three policies, the optimal defending period of every configuration reaches $2.6$ when $\alpha \geq 0.3$. MSG-1, MSG-2, MSG-3 and MSG-4 represent the defender's temporal decisions in MSG for configurations $S=\{(PHP,MySQL)$, $(Python, MySQL)$, $(PHP, postgreSQL)$, and $(Python, postgreSQL)\}$, respectively.}
\label{tau}
\end{figure}

\begin{figure}[t]
\centering
 \includegraphics[width=.95\linewidth]{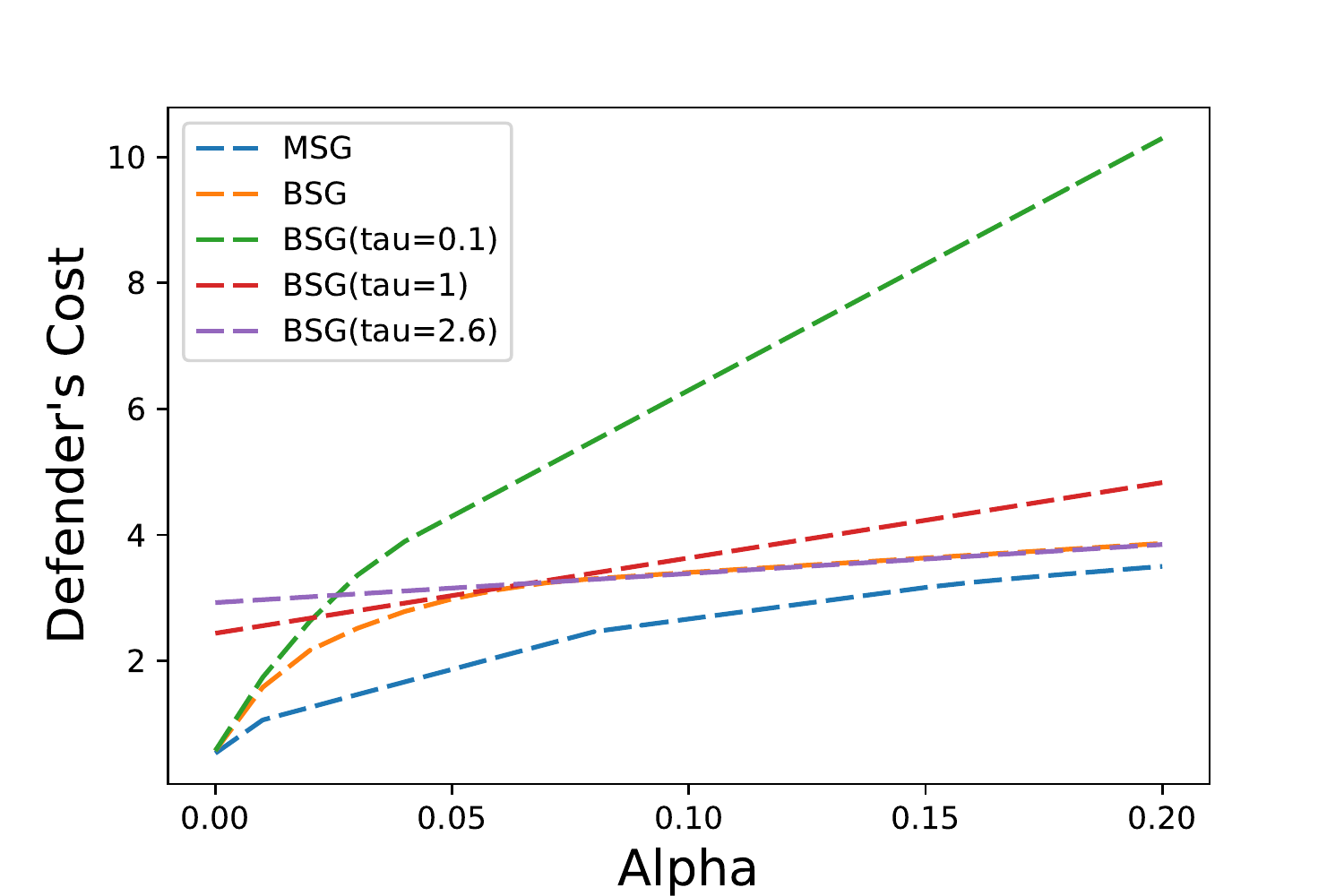}
\caption{A comparison of the defender's cost between MSG and BSG with different temporal decisions ($\tau=0.1, 1, 2.6$ and the optimal $\tau$). The cost parameter $\alpha \in [0, 0.2]$ has an increment of $0.01$. The defender's optimal spatial decision in BSG remains unchanged in these settings. }
\label{BSG_tau}
\end{figure}

\subsubsection{Results:}
When the defender is able to decide when to migrate, all three models produce lower cost than the respective models with a fixed unit defending period (See Figure~\ref{MSG_tau} and Figure ~\ref{fig:MSGvsBSG}). 
When $\alpha$ is small, the attacking cost has a major impact on the defender's cost. The shorter defending periods lead to more frequent switches that can efficiently increases the uncertainty of the attacker (Figure ~\ref{tau}), thus reduce the attacking cost in the end.    
As $\alpha$ grows, the migration cost becomes the major factor and the defender's spatial decision plays a major role on the cost since now all models choose the same $\tau=2.6$ to decrease the unit time migration cost.

After adding temporal decision, BSG and URS are capable to adjust the frequency of moving, thus, improve the performance significantly. Compare with them, the improvement of MSG is not that obvious 
since the defender with MSG has already achieved the adjustment in some levels by setting some absorbing states (zero movement).

We observed that for the spatial decisions, the defender in the MSG model tends to move to configurations with lower attacking cost, namely $(PHP, postgreSQL)$ and $(Python, postgreSQL)$, and never move out of them. That is, these two configurations are recurrent states and the other two configurations are transient states of the corresponding Markov chain. When $0.2<\alpha<0.9$, the defender tends to move between the two recurrent states to increase the attacker's uncertainty, while for larger $\alpha\geq0.9$, $(Python, postgreSQL)$ becomes an absorbing states to reduce the migration cost. In contrast, the defender under BSG chooses the strategy $(0,0,0.5,0.5)$ when $\alpha$ is small and adopts the uniform strategy  $(0.25,0.25,0.25,0.25)$ when $\alpha$ becomes large. The latter is because configurations with higher average incoming migration cost have lower attacking cost. Besides, the total cost (the attacking cost plus the average incoming migration cost) becomes more similar across different configurations when $\alpha$ becomes large.  
Essentially, MSG has advantages over BSG in all the scenarios because it allows a more refined trade-off between migration cost and attacking cost, where the latter is determined by the attacker's uncertainty. 

For the temporal decisions, MSG, BSG and URS always choose the maximum $\tau$ (i.e., $2.6$) when $\alpha>0.3$. This temporal strategy is reasonable when the migration cost weighs more than the attacking cost. In this scenario, increasing $\tau$ would decrease the frequency of migration. From Figure~\ref{BSG_tau}, we can see that the optimal temporal decision $\tau$ increases as the migration cost weighs more ($\alpha$ grows). Thus, another reason that MSG outperforms BSG for small $\alpha$ is because MSG enables the defender to choose a longer period length on the configurations that have lower attacking cost such as $(PHP,MySQL)$ (see Figure~\ref{tau}).

\section{Conclusions}
In this paper we consider a defender's optimal moving target problem in which both sequences of the system configurations and the timing of switching are important. We introduce a Markov Stackelberg Game framework to model the defender's spatial and temporal decision making that aims to minimize the losses caused by compromises of systems and the cost required for migration. We formulate the defender's optimization problem as an average-cost SMDP and transform the SMDP problem into an DTMDP problem that can be solved efficiently. We propose an optimal and efficient algorithm to compute the optimal defense policies through relative value iteration. Experimental results on real-world data demonstrate that our algorithm outperforms the state-of-the-art benchmarks for MTD. Our Markov Stackellberg Game model precisely captures a defender's spatial-temporal decision making in face of adaptive and sophisticated attackers. 

Our work opens up new avenues for future research. In our paper, we have considered the scenario when the defender has  prior information about the distribution of the attacker type. It is interesting to study the case when the distribution of the attacker type is unknown to the defender.  In our work, we have assumed that the attacker is myopic. One may consider the MTD problem when the attacker has bounded rationality~\cite{kar2015game}.

\section{Acknowledgment}
This work has been funded in part by NSF grant CNS-1816495. We thank the anonymous reviewers for their constructive comments. We would like to thank Sailik Sengupta from the Arizona State University for kindly providing their MIQP code using in BSG with the piecewise linear McCormick envelopes.


%
%


\bibliographystyle{ACM-Reference-Format}  
\bibliography{mtd}  

\end{document}